\tikzset{%
	>={Latex[width=2mm,length=2mm]},
	base/.style = {rectangle, rounded corners, draw=black,
		minimum width=2cm, minimum height=1cm,
		text centered, font=\sffamily},
	activityStarts/.style = {base, fill=blue!30},
	startstop/.style = {base, fill=red!30},
	activityRuns/.style = {base, fill=green!30},
	process/.style = {base, minimum width=2.5cm, fill=orange!15,
		font=\ttfamily},
}
\tikzset{%
	block/.style    = {draw, thick, rectangle, minimum height = 2.3em,
		minimum width = 3em},
	sum/.style      = {draw, circle, node distance = 2cm}, 
	input/.style    = {coordinate}, 
	output/.style   = {coordinate} 
}
\newtheorem{thom}{\textbf{Theorem}}
\newtheorem{assumption}{\textbf{Assumption}}
\newtheorem{remark}{\textbf{Remark}}
\newtheorem{lema}{\textbf{Lemma}}
\newtheorem{problem}{\textbf{Problem}}
\newtheorem{example}{\textbf{Example}}
\DeclareMathOperator*{\argmax}{arg\,max}
\DeclareMathOperator*{\argmin}{arg\,min}
\DeclareMathOperator*{\minimize}{\rm{minimize}}
\DeclareMathOperator*{\maximize}{\rm{maximize}}
\DeclareMathOperator*{\st}{\rm{subject~to}}
\DeclareMathOperator{\E}{\mathbb{E}}
\begin{document}

\title{Privacy-Utility Trade-Offs Against\\Limited Adversaries}
\author{Xiaoming Duan, Zhe Xu, Rui Yan, Ufuk Topcu
\thanks{This work was
	supported by ONR N00014-21-1-2502, AFRL FA9550-19-1-0169, and DARPA D19AP00004.}
\thanks{Xiaoming Duan is with the Oden Institute for Computational Engineering and Sciences, The University of Texas at Austin, Austin, TX, 78712, USA. email: {\tt\small xiaomingduan.zju@gmail.com}.}
\thanks{Zhe Xu is with the School for Engineering of Matter, Transport, and Energy, Arizona State University, Tempe, AZ, 85287, USA. email: {\tt\small xzhe1@asu.edu}.
}
\thanks{Rui Yan is with the Department of Computer Science, University of Oxford, Oxford OX1 3QD, UK. email: {\tt\small rui.yan@cs.ox.ac.uk}.}
\thanks{Ufuk Topcu is with the Department of Aerospace Engineering and Engineering Mechanics, The University of Texas at Austin, Austin, TX, 78712, USA. email: {\tt\small utopcu@utexas.edu}.}
}

\maketitle

\IEEEpeerreviewmaketitle
\begin{abstract}
We study privacy-utility trade-offs where users share privacy-correlated useful information with a service provider to obtain some utility. The service provider is adversarial in the sense that it can infer the users' private information based on the shared useful information. To minimize the privacy leakage while maintaining a desired level of utility, the users carefully perturb the useful information via a probabilistic privacy mapping before sharing it. We focus on the setting in which the adversary attempting an inference attack on the users' privacy has potentially biased information about the statistical correlation between the private and useful variables. This information asymmetry between the users and the limited adversary leads to better privacy guarantees than the case of the omniscient adversary under the same utility requirement. We first identify assumptions on the adversary's information so that the inference costs are well-defined and finite. Then, we characterize the impact of the information asymmetry and show that it increases the inference costs for the adversary. We further formulate the design of the privacy mapping against a limited adversary using a difference of convex functions program and solve it via the concave-convex procedure. When the adversary's information is not precisely available, we adopt a Bayesian view and represent the adversary's information by a probability distribution. In this case, the expected cost for the adversary does not admit a closed-form expression, and we establish and maximize a lower bound of the expected cost. We provide a numerical example regarding a census data set to illustrate the theoretical results.
\end{abstract}

\section{Introduction}
\paragraph*{Problem description and motivation} 
Sharing privacy-correlated information in return for useful service has become a common practice in modern society. For example, users may trade in location information for the localization service, browsing history for the recommendation service, and daily activity information for the health monitoring service. Despite the convenience and benefits brought by the various services, directly sharing privacy-correlated information may result in unwanted privacy leakage, e.g., home address or political affiliation. Therefore, it is of paramount importance to develop information disclosure methodologies that balance between the privacy loss and a desired level of utility.

In this paper, we adopt a statistical inference framework proposed by Calmon \emph{et al.} in \cite{FduPC-NF:12} and study the impact of the adversary's prior information on the privacy-utility trade-offs. The framework in \cite{FduPC-NF:12} is outlined in Fig.~\ref{fig:framework} and described as follows. A user has some useful information $Y$ that she/he wants to share with a service provider to gain some utility. However, the useful information $Y$ to be shared is correlated with the user's private information $X$ through a joint distribution $p_{X,Y}$, and the service provider could infer $X$ based on the shared information. To reduce the information leakage about $X$, the user instead shares the perturbed information $Z$ produced by a privacy mapping $p_{Z|Y}$, and receives a possibly lower utility based on $Z$. The privacy mapping $p_{Z|Y}$ is a design variable that simultaneously controls the distortion between $Y$ and $Z$ and the information leakage about $X$ from $Z$, and thus it determines the privacy-utility trade-offs. In the above framework, the service provider knows precisely the statistical correlation $p_{X,Y}$. However, this omniscience assumption may not hold in practice and may result in a potentially conservative design of the privacy mapping $p_{Z|Y}$. In this paper, we focus on the scenario where the service provider knows a biased correlation $\hat{p}_{X,Y}\neq p_{X,Y}$ and investigate the implications of this information asymmetry. We use a limited adversary to refer to a service provider that knows a biased correlation.

\begin{figure}
	\centering
	\begin{tikzpicture}[auto, thick, node distance=2cm, >=triangle 45]
		\draw
		node[align=center] at (0,0)[block] (priv) {Private $X$}
		node[align=center] at (3.3,0)[block] (useful) {Useful $Y$}
		node[align=center] at (6.6,0)[block] (shared) {Released $Z$}
		node[align=center,minimum size=0pt] at (4.95,-0.66) (output1) {}		
		node[align=center,minimum size=0pt] at (4.95,-1.8) (output2) {}
		;					
		
		\draw[<->](priv) -- node{coupled} (useful);
		
		\draw[->](useful) -- node[above] {share}  node[below] {$p_{Z|Y}$} (shared);
		
		\draw[->] (shared) -- (6.6,1.1) -| node[label={[xshift=3cm, yshift=-0.2cm]inference ({\color{red}privacy loss})}] {}(priv);
		
		\draw[-] (shared) -- (6.6,-0.8) -| node[] {}(useful);
		
		\node [fit=(useful) (priv),draw,dashed,black] {};			
				
		\draw[->](output1) -- node[auto=left,align=center] {distortion\\ ({\color{red}utility loss})} (output2);		
	\end{tikzpicture}
\vspace{-10pt}
\caption{A statistical inference framework where the probabilistic privacy mapping $p_{Z|Y}$ perturbs the useful information $Y$ to achieve a trade-off between the privacy loss (inference of $X$ based on $Z$) and the utility loss (distortion between $Y$ and $Z$).}\label{fig:framework}
\end{figure}
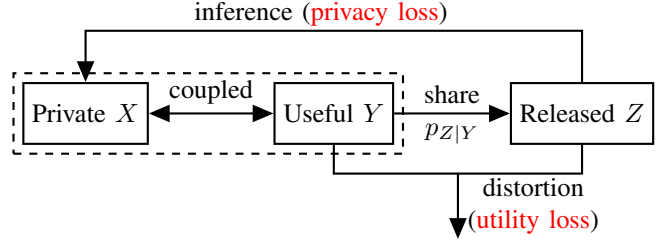

\paragraph*{Literature review} 
Various metrics that quantify privacy leakage in different scenarios and applications exist in the literature. Differential privacy is a popular and widely studied privacy notion that protects the privacy of individual records from queries of databases~\cite{CD-AR:14}. A differentially private mechanism ensures that a single entry change in the database does not incur significant changes in the output distribution by returning a randomized answer. In the control community, differential privacy has been adapted and applied in many privacy-critical problems such as  filtering~\cite{JLeN-GJP:14}, multi-agent consensus~\cite{EN-PT-JC:17}, and distributed optimization~\cite{SH-UT-GJP:17,EN-PT-JC:18, XC-JZ-VHP-ZT:20,TD-SZ-JH-CC-XG:21}; see
\cite{JC-GED-SH-JLeN-SM-GJP:16, SH-GJP:18,YL-MZ:19} for comprehensive surveys on privacy in systems and control. Since differential privacy does not rely on the distribution of the user data, it provides the ``worst-case" privacy guarantees~\cite{WW-LY-JZ:16}.

Different from differential privacy, information-theoretic privacy measures such as mutual information~\cite{FduPC-NF:12, LS-SRR-HVP:13, MPJ-LZ-SC:18}, maximal leakage~\cite{II-ABW-SK:20}, maximal $\alpha$-leakage~\cite{JL-OK-LS-FduPC:19}, and total variation distance~\cite{BR-DG:20}, take into account the prior data distribution; see~\cite{IW-DE:18, MB-OG-AY-FO-HVP-LS-RFS:21} for overviews on information-theoretic privacy and security. Since the privacy and utility requirements usually compete,  appropriate privacy mappings need to be designed to achieve a trade-off between the two types of requirements. The work in~\cite{FduPC-NF:12} proposes the framework in Fig.~\ref{fig:framework} to study privacy-utility trade-offs. The authors design a privacy mapping $p_{Z|Y}$ by solving a convex program such that the mutual information between $X$ and $Z$ is minimized and the average distortion (measured by a distortion function) between $Y$ and $Z$ is constrained. The framework is extended to scenarios where $X$ and $Y$ are time sequences in~\cite{MAE-NF:15} and where data availability differs for the design of the privacy mapping (e.g., $p_{Z|X,Y}$) in~\cite{YOB-YW-PI:16}. The work in~\cite{SS-AZ-FduPC-SB-NF-BK-PO-NT:15}  deals with the case when the true correlation $p_{X,Y}$ may not be known to the user, and a possibly mismatched correlation is used in the privacy mapping design. The authors also study the quantization problem to cope with design variables in high dimensions. The recent work~\cite{MD-HW-FduPC-LS:20} discusses the situation where there is a discrepancy between the empirical correlation used in the design of the privacy mapping and the true correlation in practice, and the authors show that the privacy mapping asymptotically converges to the optimal one as the sample size increases for various privacy metrics. In~\cite{SA-FA-TL:14} and \cite{FduPC-AM-MM:15}, the mutual information serves as both the privacy metric and the utility function in the privacy mapping design. Information-theoretic privacy measures have also appeared in various applications. The authors in~\cite{JL-LS-VYFT-FduPC:18} use mutual information as the privacy metric and design the optimal privacy mapping for hypothesis testing; leveraging a rechargeable battery in households, Li \emph{et al.} in~\cite{SL-AK-AM:18}  study the optimal battery charging policy that minimizes the information leakage, measured by the normalized mutual information, to the utility provider; Nekouei \emph{et al.} formulate the privacy-aware estimation problem in~\cite{EN-HS-MS-KHJ:21}, where the authors build an optimal estimator of a public random variable under a constraint on the privacy level of a correlated private variable. See~\cite{EN-TT-MS-KHJ:19} for more applications of information-theoretic notions in estimation and control.

In the previous works regarding information-theoretic privacy, although the employed metrics are information-theoretically well-posed and meaningful, they all have an implicit assumption regarding the capability of the adversary, i.e., the adversary has the \emph{same} statistical information as the user. However, such a worst-case assumption may not be valid in practice, and one could (and should) exploit the possibly imperfect information of the adversary to achieve improved privacy guarantees. In this paper, we relax the assumption that the adversary knows the precise statistical correlation between the private and useful information and investigate the implications of this relaxation on the privacy-utility trade-offs.

\paragraph*{Contributions}
In this paper, we study the impact of the adversary's information on the privacy-utility trade-offs under a statistical inference framework. We show that the information asymmetry brings advantages to the user and leads to higher inference costs for the adversary. The main contributions of this paper are as follows.
\begin{enumerate}
\item We first identify necessary and sufficient conditions on the adversary's information so that the inference process and the inference costs for the adversary are well-posed. Given these conditions, we show that the inference costs for the adversary increase as a result of the information asymmetry between the user and the adversary.
\item We formulate the design problem of the privacy mapping as a \emph{difference of convex functions program} and adopt the concave-convex procedure to solve it. Moreover, we derive a sufficient condition on the adversary's information under which the design problem is convex. 
\item When the adversary's biased information is not precisely available, we take a Bayesian approach and assume a distribution over the information the adversary may have. Since the expected cost for the adversary in this case does not admit a closed-form expression, we derive a lower bound for it and then design a privacy mapping that maximizes the obtained lower bound.
\end{enumerate}

\paragraph*{Organization} We organize the rest of the paper as follows. Section~\ref{sec:prelim} reviews relevant information-theoretic concepts. We introduce the original privacy-utility trade-off problem and the problem of interest in Section~\ref{sec:tradeoff}. We then study the impact of the adversary's information on the design of the privacy mapping in detail in Section~\ref{sec:known}. Section~\ref{sec:expected} presents a Bayesian approach for the case when the adversary's information is not exactly available. We provide a numerical example using a census data set in Section~\ref{sec:numerics}. Section~\ref{sec:conclusion} concludes the paper.

\section{Notation and Preliminaries}\label{sec:prelim}
\subsection{Notation} 

Let $\mathbb{R}$, $\mathbb{R}^n$, and $\mathbb{R}^{m\times n}$ be the set of real numbers, set of real vectors of dimension $n$, and set of real matrices of dimension $m\times n$, respectively. We use bold symbols to denote vectors and matrices, and we use capital letters to denote random variables. All vectors in this paper are column vectors. We denote the probability simplex in dimension $n$ by $\Delta_n$, i.e., $\Delta_n=\{\mathbf{x}\in\mathbb{R}^n\,|\,\sum_{i=1}^n\mathbf{x}_i=1,\mathbf{x}_i\geq0\textup{ for } 1\leq i\leq n\}$. The matrix (vector) of $1$'s in dimension $m\times n$ ($n$) is $\mathbf{1}_{m\times n}$ ($\mathbf{1}_n$). We use $\mathbf{A}\odot\mathbf{B}$ ($\mathbf{A}\oslash\mathbf{B}$) to denote the component-wise product (division, assuming well-defined) of two matrices $\mathbf{A}\in\mathbb{R}^{m\times n}$ and $\mathbf{B}\in\mathbb{R}^{m\times n}$. For a matrix $\mathbf{A}\in\mathbb{R}^{m\times n}$, $\mathbf{A}_{i,*}$ and $\mathbf{A}_{*,j}$ are column vectors that represent the $i$-th row and $j$-th column of $\mathbf{A}$, respectively, for $i\in\{1\dots,m\}$ and $j\in\{1,\dots,n\}$. We denote the Frobenius norm of a matrix $\mathbf{A}\in\mathbb{R}^{m\times n}$ by $\|\mathbf{A}\|_F$. For a finite set $\mathcal{S}$, $|\mathcal{S}|$ is its cardinality.

\subsection{Entropy, conditional entropy and mutual information}
For a discrete random variable $X$ over $\mathcal{X}=\{1,\cdots,|\mathcal{X}|\}$ with the probability mass function $p_X:\mathcal{X}\to[0,1]$, the \emph{entropy} $H(X)$ of $X$ is defined by 
\begin{equation*}
H(X) =-\sum_{x\in\mathcal{X}}p_X(x)\log p_X(x),
\end{equation*}
where the logarithm is the natural logarithm for the ease of exposition and $0\log0=0$. The entropy $H(X)$ of a random variable $X$ measures the amount of information (or uncertainty) $X$ contains. For a pair of discrete random variables $X$ and $Y$ taking values in $\mathcal{X}=\{1,\cdots,|\mathcal{X}|\}$ and $\mathcal{Y}=\{1,\cdots,|\mathcal{Y}|\}$, respectively, the \emph{conditional entropy} of $X$ given $Y$ is defined by
\begin{align*}
H(X|Y)&=\sum_{y\in\mathcal{Y}}p_Y(y)H(X|Y=y)\\
&=-\sum_{y\in\mathcal{Y}}p_Y(y)\sum_{x\in\mathcal{X}}p_{X|Y}(x|y)\log p_{X|Y}(x|y),
\end{align*}
where $p_{X|Y}$ is the conditional distribution of $X$ given $Y$. The conditional entropy $H(X|Y)$ measures the amount of information in $X$ provided that $Y$ is given. In particular, if $X$ and $Y$ are independent, then we have $H(X|Y)=H(X)$. The \emph{mutual information} $I(X;Y)$ of the random variables $X$ and $Y$ is defined by
\begin{equation*}
I(X;Y)=H(X)-H(X|Y)=H(Y)-H(Y|X),
\end{equation*}
which measures the uncertainty reduction of the random variable $X$ when $Y$ is given. The \emph{Kullback–Leibler} (KL) \emph{divergence} $\mathcal{KL}(p_X||q_X)$ of two probability mass functions $p_X$ and $q_X$ over $\mathcal{X}$ is defined by
\begin{equation*}
\mathcal{KL}(p_X||q_X)=\sum_{x\in\mathcal{X}}p_X(x)\log\frac{p_X(x)}{q_X(x)}.
\end{equation*}
The KL divergence satisfies $\mathcal{KL}(p_X||q_X)\geq0$ with equality if and only if $p_X(x)=q_X(x)$ for all $x\in\mathcal{X}$ \cite[Theorem 2.6.3]{TMC-JAT:12}.

\section{Privacy-utility trade-offs and problem of interest}\label{sec:tradeoff}
This section presents the original privacy-utility trade-off problem in~\cite{FduPC-NF:12} and states the problem of interest.

\subsection{Privacy-utility trade-offs}
A user has some private information $X$, e.g., political affiliation, and some useful information $Y$ correlated with $X$ through $p_{X,Y}$ , e.g., media preferences. The user could share the useful information $Y$ with a service provider to obtain some utility, e.g., content recommendations. However, since $X$ and $Y$ are correlated, the (adversarial) service provider might infer $X$ from $Y$. Therefore, the user instead shares the carefully perturbed information $Z$ of $Y$ via a privacy mapping $p_{Z|Y}$, which makes $X$ less inferable but maintains good usability of $Z$.
 
For a realization $x\in\mathcal{X}$ of the private information $X$ and an adversary's perceived prior $q_X\in\Delta_{|\mathcal{X}|}$, the \emph{inference cost} for the adversary is given by $C(x,q_X)$, which under the log-loss function~\cite{NM-MF:98,FduPC-NF:12,AM-SS-NF-MM:14}  becomes $C(x,q_X)=-\log q_X(x)$. Since $X$ is a random variable with the probability mass function $p_{X}$, the expected cost for the adversary is
\begin{equation}\label{eq:adcost}
	c_0(q_X)=\sum_{x\in\mathcal{X}}p_X(x)C(x,q_X)=-\sum_{x\in\mathcal{X}}p_X(x)\log q_X(x).
\end{equation}
If no additional information is available, then the adversary selects $q_X^*$ so as to minimize the expected cost, i.e.,
\begin{equation}\label{eq:optq}
q_X^*=\argmin_{q_X}c_0(q_X)=p_X,
\end{equation}
with the associated optimal cost
\begin{equation*}
c_0^*=\min_{q_X}c_0(q_X)=H(X).
\end{equation*}
When a realization $z\in\mathcal{Z}$ of the perturbed information $Z$ (correlated with $Y$ and thus $X$) is disclosed according to a privacy mapping $p_{Z|Y}$, the adversary computes the posterior $p_{X|Z}$ and the associated cost
\begin{equation}\label{eq:conditionalcost}
	c_z(q_X)=-\sum_{x\in\mathcal{X}}p_{X|Z}(x|z)\log q_{X,z}(x).
\end{equation}
In this case, the optimal cost $c_z^*$ and associated distribution $q_{X,z}^*$ become
\begin{equation}\label{eq:optqwithz}
c_z^*=H(X|Z=z)\quad{\textup{and}}\quad q_{X,z}^*=p_{X|Z}(\cdot|z).
\end{equation}
Finally, the average cost for the adversary given the information $Z$ can be computed by
\begin{equation*}
c_Z=\E_Z[c_z^*]=H(X|Z).
\end{equation*}
The cost reduction for the adversary, or the information leakage for the user, due to the information release, is then
\begin{equation*}
L=c_0^*-c_Z=I(X;Z)=\mathcal{KL}(p_{X,Z}||{p}_{X}{p}_{Z}).
\end{equation*}
The privacy-utility trade-off problem concerns the design of the privacy mapping $p_{Z|Y}$ such that the information leakage $L$ is minimized, or equivalently, the conditional entropy $H(X|Z)$ is maximized (since $H(X)$ is a constant). On the other hand, the perturbation to the useful information $Y$ induces a utility loss for the user measured by $\E_{Y,Z}[d(y,z)]$, where $d(y,z):\mathcal{Y}\times\mathcal{Z}\to\mathbb{R}_{\geq0}$ is a distortion function indicating how far a realization $z\in\mathcal{Z}$ of the released information $Z$ is away from a realization $y\in\mathcal{Y}$ of the useful information $Y$ and $d(y,z)=0$ for $y=z$. To balance the privacy and utility losses, we solve the following optimization problem~\cite{FduPC-NF:12}
\begin{subequations}\label{eq:utilityprivacy}
	\begin{align}
	\minimize_{p_{Z|Y}} \quad & L\label{eq:obj}\\
	\st \quad & \E_{Y,Z}[d(y,z)] \leq\delta,\label{eq:constraintdis}\\
	&p_{Z|Y}(\cdot|y)\in\Delta_{|\mathcal{Z}|},\quad\forall y\in\mathcal{Y},\label{eq:constraintconv}
\end{align}
\end{subequations}
where $\delta\geq0$ is the tolerance for the utility loss. 

\subsection{Problem of interest}
In order to determine the optimal prior distribution $q_X^*$ in~\eqref{eq:optq} and the optimal posterior distribution $q_{X,z}^*$ in~\eqref{eq:optqwithz}, the adversary needs to know precisely the correlation $p_{X,Y}$ between the private information $X$ and useful information $Y$. In this paper, we study cases where the adversary has imperfect information about the correlation $p_{X,Y}$, and we design a privacy mapping $p_{Z|Y}$ that achieves better privacy-utility trade-offs than the case of perfect information for the adversary. Without loss of generality, we make the following assumption on the correlation $p_{X,Y}$.
\begin{assumption}[Correlation between private and useful information]\label{assump:corpri}
The marginal distributions $p_X$ and $p_Y$ of the joint distribution $p_{X,Y}$ satisfy $p_X(x)>0$  and $p_Y(y)>0$ for all $x\in\mathcal{X}$ and $y\in\mathcal{Y}$, respectively. 
\end{assumption}
Assumption~\ref{assump:corpri} essentially requires that no redundant elements that have zero probability of being realized are included in the sets $\mathcal{X}$ and $\mathcal{Y}$. To make~\eqref{eq:constraintdis}-\eqref{eq:constraintconv} feasible for any $\delta\geq0$, we also assume that it is always possible to release any useful information $y\in\mathcal{Y}$ directly, i.e., we assume $\mathcal{Y}\subset\mathcal{Z}$.

\section{A limited adversary with known correlations}\label{sec:known}
In Section~\ref{sec:tradeoff}, we derived the costs for the adversary under the assumption that it can make informed decisions, i.e., the adversary knows exactly the correlation $p_{X,Y}$ between the private and useful information. Despite being able to account for the worst-case scenario, such an omniscience assumption might lead to a conservative design of the privacy mapping $p_{Z|Y}$. Moreover, it is often the case that only limited information is available to the adversary in practice. In this section, we study cases in which the adversary has imperfect information about $p_{X,Y}$ and their implications. In particular, we show that it is possible to exploit such information asymmetry between the user and the adversary to design a privacy mapping that achieves improved privacy guarantees compared to the case in which the adversary knows precisely $p_{X,Y}$. We focus on the impact of the adversary's biased information about $p_{X,Y}$ and assume  that the privacy mapping $p_{Z|Y}$ is publicly available \cite{WW-LY-JZ:16}. We will denote all variables related to the adversary by symbols with hats. 

\subsection{Assumptions, costs, and problem of interest}
When the adversary's information $\hat{p}_{X,Y}$ about the correlation between $X$ and $Y$ is imprecise, i.e., $\hat{p}_{X,Y}\neq p_{X,Y}$, the costs derived in Section~\ref{sec:tradeoff} are not valid or even well-defined. In fact, it is possible that given a privacy mapping $p_{Z|Y}$, a realization $z\in\mathcal{Z}$ can be generated with positive probability under $p_{X,Y}$, but zero probability under $\hat{p}_{X,Y}$. As a result, the posterior distribution $\hat{p}_{X|Z}$ calculated by the adversary is ill-defined. Hereafter, we identify minimal appropriate assumptions on $\hat{p}_{X,Y}$ so that important objects such as $\hat{p}_{X|Z}$ are well-posed and the privacy-utility trade-off problem is meaningful.


\begin{assumption}[Support of $\hat{p}_{X,Y}$]\label{assump:supportad}
Given a joint distribution $p_{X,Y}$ that satisfies Assumption~\ref{assump:corpri}, the joint distribution $\hat{p}_{X,Y}$ satisfies 
\begin{enumerate}[label=(\alph*)]
	\item\label{assump:ymarginal} the marginal distribution $\hat{p}_Y$ of $\hat{p}_{X,Y}$ has support $\mathcal{Y}$;
	\item\label{assump:xmarginal} the marginal distribution $\hat{p}_X$ of $\hat{p}_{X,Y}$ has support $\mathcal{X}$;
	\item\label{assump:joint} for any $x\in\mathcal{X}$ and $y\in\mathcal{Y}$, if $p_{X,Y}(x,y)>0$, then $\hat{p}_{X,Y}(x,y)>0$.  
\end{enumerate}
\end{assumption}

In Assumption~\ref{assump:supportad}, \ref{assump:supportad}\ref{assump:joint} is the strongest and implies the other two. As we shall see, it is also the minimal assumption required to make the privacy-utility trade-off problem nontrivial. Assumptions~\ref{assump:supportad}\ref{assump:ymarginal} and \ref{assump:supportad}\ref{assump:xmarginal} are intermediate assumptions that guarantee well-posedness of important quantities that will be used later. We first show that Assumption~\ref{assump:supportad}\ref{assump:ymarginal} is a necessary and sufficient condition for the posterior $\hat{p}_{X|Z}$ to be well-defined under any privacy mapping $p_{Z|Y}$.

\begin{lema}[Well-posedness of the posterior $\hat{p}_{X|Z}$]\label{lemma:wellposedness}
Given a joint distribution $p_{X,Y}$ that satisfies Assumption~\ref{assump:corpri}, the posterior $\hat{p}_{X|Z}$ is well-defined under any privacy mapping $p_{Z|Y}$ if and only if $\hat{p}_{X,Y}$ satisfies Assumption~\ref{assump:supportad}\ref{assump:ymarginal}.
\end{lema}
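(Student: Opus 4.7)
The plan is to handle the two implications separately by unpacking the Bayes-rule definition of the posterior. By construction, for a fixed privacy mapping $p_{Z|Y}$ the adversary computes
\begin{equation*}
\hat{p}_{X|Z}(x|z) = \frac{\sum_{y\in\mathcal{Y}}\hat{p}_{X,Y}(x,y)\,p_{Z|Y}(z|y)}{\sum_{y\in\mathcal{Y}}\hat{p}_Y(y)\,p_{Z|Y}(z|y)} = \frac{\hat{p}_{X,Z}(x,z)}{\hat{p}_Z(z)},
\end{equation*}
so well-posedness reduces to showing that the denominator $\hat{p}_Z(z)$ is strictly positive at every realization $z\in\mathcal{Z}$ that actually occurs, namely every $z$ with $p_Z(z)>0$ under the true correlation.

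For sufficiency, I would assume Assumption~\ref{assump:supportad}\ref{assump:ymarginal}, i.e., $\hat{p}_Y(y)>0$ for every $y\in\mathcal{Y}$, fix an arbitrary privacy mapping $p_{Z|Y}$, and pick any $z$ with $p_Z(z)>0$. Because Assumption~\ref{assump:corpri} gives $p_Y(y)>0$ for all $y\in\mathcal{Y}$, the identity $p_Z(z)=\sum_y p_Y(y)p_{Z|Y}(z|y)$ forces the existence of some $y^\star\in\mathcal{Y}$ with $p_{Z|Y}(z|y^\star)>0$. Using the same $y^\star$ in the adversary's marginal gives $\hat{p}_Z(z)\geq\hat{p}_Y(y^\star)p_{Z|Y}(z|y^\star)>0$, so the denominator never vanishes on the support of $Z$ and the posterior is well-defined.

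For necessity, I will argue the contrapositive: if Assumption~\ref{assump:supportad}\ref{assump:ymarginal} fails, I construct a privacy mapping that makes $\hat{p}_{X|Z}$ ill-defined. Suppose there exists $y^\star\in\mathcal{Y}$ with $\hat{p}_Y(y^\star)=0$. Exploiting $\mathcal{Y}\subset\mathcal{Z}$, I pick the identity mapping $p_{Z|Y}(z|y)=\mathbb{1}\{z=y\}$. Then the release $z=y^\star$ is generated by the user with strictly positive probability since $p_Z(y^\star)=p_Y(y^\star)>0$ by Assumption~\ref{assump:corpri}, but on the adversary's side $\hat{p}_Z(y^\star)=\hat{p}_Y(y^\star)=0$, so Bayes' rule returns $0/0$ at $z=y^\star$ and the posterior fails to be well-defined at a realizable outcome.

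The argument is essentially a Bayes-rule support check, so I do not anticipate a hard technical obstacle; the only subtlety to be careful about is making explicit that ``well-defined'' must be interpreted on the support of the true marginal $p_Z$, since otherwise the statement becomes vacuous (one could simply ignore unreachable $z$). I would state this convention at the start of the proof and then both directions follow from one Bayes expansion and one explicit construction, respectively.
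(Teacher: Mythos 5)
Your proposal is correct and follows essentially the same route as the paper: both directions reduce to checking that $\hat{p}_Z(z)>0$ whenever $p_Z(z)>0$, with sufficiency argued identically and necessity via a privacy mapping under which some realizable $z$ is reachable only from the $y$ with $\hat{p}_Y(y)=0$ (your identity mapping is just a concrete instance of the paper's generic construction). No gaps.
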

\begin{proof}
For any $x\in\mathcal{X}$ and $z\in\mathcal{Z}$, the posterior $\hat{p}_{X|Z}(x|z)$ can be written as
\begin{equation}\label{eq:posterior}
	\hat{p}_{X|Z}(x|z)=\frac{\hat{p}_{X,Z}(x,z)}{\hat{p}_{Z}(z)}.
\end{equation}
For any $z\in\mathcal{Z}$ such that 
\begin{equation}\label{eq:pz}
	p_Z(z)=\sum_{y\in\mathcal{Y}}p_{Y}(y)p_{Z|Y}(z|y)>0,
\end{equation}
\eqref{eq:posterior} is well-defined if and only if
\begin{equation}\label{eq:Zad}
	\hat{p}_{Z}(z)=\sum_{y\in\mathcal{Y}}\hat{p}_{Y}(y)p_{Z|Y}(z|y)>0.
\end{equation}

\emph{Necessity} (Well-posedness$\implies$ Assumption~\ref{assump:supportad}\ref{assump:ymarginal}): We prove by contrapositive. Suppose that $\hat{p}_{X,Y}$ does not satisfy Assumption~\ref{assump:supportad}\ref{assump:ymarginal}, i.e., there exists a $y'\in\mathcal{Y}$ such that $\hat{p}_Y(y')=0$. Then, for any $z\in\mathcal{Z}$ and the privacy mapping $p_{Z|Y}$ that satisfies
\begin{equation*}
p_{Z|Y}(z|y')>0\quad\textup{and}\quad p_{Z|Y}(z|y'')=0,~\forall y''\neq y',
\end{equation*}
we have that
 \begin{equation*}
	p_Z(z)=\sum_{y\in\mathcal{Y}}p_{Y}(y)p_{Z|Y}(z|y)=p_{Y}(y')p_{Z|Y}(z|y')>0.
\end{equation*}
However,
\begin{equation*}
\hat{p}_{Z}(z)=\sum_{y\in\mathcal{Y}}\hat{p}_{Y}(y)p_{Z|Y}(z|y)=\hat{p}_{Y}(y')p_{Z|Y}(z|y')=0,
\end{equation*}
which implies that $\hat{P}_{X|Z}$ is not well-defined.

\emph{Sufficiency} (Assumption~\ref{assump:supportad}\ref{assump:ymarginal}$\implies$ Well-posedness): Suppose that $\hat{p}_{X,Y}$ satisfies Assumption~\ref{assump:supportad}\ref{assump:ymarginal}. Then, for any privacy mapping $p_{Z|Y}$ and $z\in\mathcal{Z}$ that satisfies \eqref{eq:pz}, we have that there exists at least one $y'\in\mathcal{Y}$ such that $p_Y(y')p_{Z|Y}(z|y')>0$. Therefore, by Assumption~\ref{assump:supportad}\ref{assump:ymarginal}, we also have $\hat{p}_Y(y')p_{Z|Y}(z|y')>0$ , which implies~\eqref{eq:Zad}.
\end{proof}

With Lemma~\ref{lemma:wellposedness}, we are now ready to calculate the costs for the adversary who knows a biased distribution $\hat{p}_{X,Y}\neq p_{X,Y}$.

\begin{thom}[Costs for a limited adversary]\label{thm:cost}
Given joint distributions $p_{X,Y}$ and $\hat{p}_{X,Y}$ that satisfy Assumption~\ref{assump:corpri} and~\ref{assump:supportad}\ref{assump:ymarginal}, respectively, the average costs for the adversary before and after the release of $Z$ are given by
\begin{align}
&	{\hat{c}_0}^*=H(X)+\mathcal{KL}(p_X||\hat{p}_X),\label{eq:costimperfect_0}\\
&	\hat{c}_Z=H(X|Z)+\mathcal{KL}(p_{X|Z}||\hat{p}_{X|Z}).\label{eq:costimperfect_z}
\end{align}
Moreover, the information leakage satisfies
\begin{equation}\label{eq:leakageimperfect}
\hat{L}={\hat{c}_0}^*-\hat{c}_Z=\mathcal{KL}(p_{X,Z}||\hat{p}_{X}\hat{p}_{Z})-\mathcal{KL}(p_{X,Z}||\hat{p}_{X,Z}).
\end{equation}
\end{thom}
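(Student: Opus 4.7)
The plan is to track two distinct quantities throughout: what the adversary chooses based on its biased belief $\hat{p}_{X,Y}$, versus the cost actually paid when $X$ and $Z$ are drawn from the true law $p_{X,Y}$ (propagated through the privacy mapping $p_{Z|Y}$). Both expressions in the theorem should emerge from the standard cross-entropy decomposition: adding and subtracting the true log-probability inside a log-loss converts a cross-entropy into an entropy-plus-KL term.

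For \eqref{eq:costimperfect_0}, note that the adversary minimises its \emph{perceived} expected cost $-\sum_{x}\hat{p}_X(x)\log q_X(x)$, so by the optimality argument that produced \eqref{eq:optq} (which relies on $\hat{p}_X$ having full support, i.e., Assumption~\ref{assump:supportad}\ref{assump:xmarginal}), the reported prior is $\hat{q}_X^{*}=\hat{p}_X$. The real cost paid, $c_0(\hat{q}_X^{*})=-\sum_{x}p_X(x)\log\hat{p}_X(x)$, is evaluated against the true marginal $p_X$ as in \eqref{eq:adcost}; adding and subtracting $p_X(x)\log p_X(x)$ inside the sum splits it into $H(X)+\mathcal{KL}(p_X\|\hat{p}_X)$.

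For \eqref{eq:costimperfect_z}, \lemaref{lemma:wellposedness} guarantees that $\hat{p}_{X|Z}(\cdot|z)$ is well defined for every $z\in\mathcal{Z}$ with $p_Z(z)>0$. Repeating the Step~1 reasoning in the conditional world with the cost $c_z$ from \eqref{eq:conditionalcost} gives $\hat{q}_{X,z}^{*}=\hat{p}_{X|Z}(\cdot|z)$ and hence $\hat{c}_z^{*}=H(X|Z=z)+\mathcal{KL}\bigl(p_{X|Z}(\cdot|z)\,\|\,\hat{p}_{X|Z}(\cdot|z)\bigr)$. Averaging over $Z$ with respect to the \emph{true} marginal $p_Z$ (because that is the actual law of the released variable) and recognising the averaged pointwise KL as the conditional KL yields \eqref{eq:costimperfect_z}.

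The leakage identity \eqref{eq:leakageimperfect} follows by subtraction. I would first rewrite $\hat{L}$ as $I(X;Z)+\mathcal{KL}(p_X\|\hat{p}_X)-\mathcal{KL}(p_{X|Z}\|\hat{p}_{X|Z})$, then expand each of the three terms as a sum over $\mathcal{X}\times\mathcal{Z}$ weighted by $p_{X,Z}$, combine the three logarithms into a single $\log\bigl[\hat{p}_{X|Z}(x|z)/\hat{p}_X(x)\bigr]$ (the $p$-factors telescope away), and finally substitute $\hat{p}_{X|Z}(x|z)=\hat{p}_{X,Z}(x,z)/\hat{p}_Z(z)$ to recognise the difference $\mathcal{KL}(p_{X,Z}\|\hat{p}_X\hat{p}_Z)-\mathcal{KL}(p_{X,Z}\|\hat{p}_{X,Z})$. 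The main obstacle is purely bookkeeping: the telescoping must be carried out using the true $p_Z$ (inherited from averaging in Step~2) so that the result is a genuine joint KL rather than an average of conditional ones, and one must verify that all denominators $\hat{p}_X(x)$, $\hat{p}_Z(z)$, $\hat{p}_{X,Z}(x,z)$ are strictly positive wherever $p_{X,Z}(x,z)>0$ — which is exactly what Assumption~\ref{assump:supportad}\ref{assump:ymarginal} together with \lemaref{lemma:wellposedness} provides.
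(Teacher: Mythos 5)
Your proposal is correct and follows essentially the same route as the paper: the adversary optimises its perceived log-loss to get $\hat{q}_X^{*}=\hat{p}_X$ and $\hat{q}_{X,z}^{*}=\hat{p}_{X|Z}(\cdot|z)$, the actual costs are evaluated under the true law and split via the cross-entropy $=$ entropy $+$ KL decomposition, and the leakage identity is the same add-and-subtract of $\log p_{X,Z}(x,z)$ (the paper just subtracts the two cross-entropies directly rather than first passing through $I(X;Z)+\mathcal{KL}(p_X\|\hat{p}_X)-\mathcal{KL}(p_{X|Z}\|\hat{p}_{X|Z})$). One small correction: the optimality argument giving $\hat{q}_X^{*}=\hat{p}_X$ does not need Assumption~\ref{assump:supportad}\ref{assump:xmarginal} (Gibbs' inequality yields it regardless); that assumption is only needed later, in Lemma~\ref{lemma:infinitecosts}, to make ${\hat{c}_0}^{*}$ finite, which is why the theorem assumes only \ref{assump:supportad}\ref{assump:ymarginal}.
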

\begin{proof}
Since the adversary's information about the distribution of the private information $X$ is the marginal distribution $\hat{p}_X$ of $\hat{p}_{X,Y}$,  the expected cost in~\eqref{eq:adcost} from the adversary's perspective becomes
\begin{equation}\label{eq:adcostimperfect}
	\hat{c}_0(\hat{q}_X)=\sum_{x\in\mathcal{X}}\hat{p}_{X}(x)C(x,\hat{q}_X)=-\sum_{x\in\mathcal{X}}\hat{p}_{X}(x)\log\hat{q}_X(x),
\end{equation}
which has the optimizer ${\hat{q}_{X}}^*=\hat{p}_X$.
From the user's perspective, when the adversary uses  ${\hat{q}_{X}}^*$, the actual cost for the adversary is
\begin{align*}
{\hat{c}_0}^*=\sum_{x\in\mathcal{X}}p_{X}(x)C(x,{\hat{q}_{X}}^*)&=-\sum_{x\in\mathcal{X}}p_{X}(x)\log \hat{p}_X(x)\\
&=H(X)+\mathcal{KL}(p_X||\hat{p}_X).
\end{align*}

Similarly, when a realization $z\in\mathcal{Z}$ is released, the adversary computes  a biased posterior $\hat{p}_{X|Z}$, which is well-defined by Assumption~\ref{assump:supportad}\ref{assump:ymarginal}, based on the correlation $\hat{p}_{X,Y}$ and uses the corresponding optimizer $\hat{q}_{X,z}^*=\hat{p}_{X|Z}(\cdot|z)$. From the user's perspective, the actual cost for the adversary in this case is
\begin{align*}
\hat{c}_z^*&=\sum_{x\in\mathcal{X}}p_{X|Z}(x|z)C(x,\hat{q}_{X,z}^*(x))\\
&=H(X|Z=z)+\sum_{x\in\mathcal{X}}p_{X|Z}(x|z)\log\frac{p_{X|Z}(x|z)}{\hat{p}_{X|Z}(x|z)},
\end{align*}
and the average cost for the adversary given the information $Z$ can be computed by
\begin{align*}
	\hat{c}_Z=\E_{Z}[\hat{c}_z^*]&=H(X|Z)+\mathcal{KL}(p_{X|Z}||\hat{p}_{X|Z}).
\end{align*}

Finally, the information leakage, defined as the difference between the costs for the adversary before and after releasing $Z$, is
\begin{align}\label{eq:leakgead}
	\begin{split}
	&\quad\quad\hat{L}\\
	&={\hat{c}_0}^*-\hat{c}_Z\\
&=-\sum_{x\in\mathcal{X}}p_{X}(x)\log \hat{p}_X(x)+\sum_{x,z}p_{X,Z}(x,z)\log \hat{p}_{X|Z}(x|z)\\
&=\sum_{x,z}p_{X,Z}(x,z)\log \frac{\hat{p}_{X,Z}(x,z)}{ \hat{p}_X(x)\hat{p}_Z(z)}\\
&=\sum_{x,z}p_{X,Z}(x,z)\log \frac{p_{X,Z}(x,z)}{ \hat{p}_X(x)\hat{p}_Z(z)}\\
&\quad+\sum_{x,z}p_{X,Z}(x,z)\log \frac{\hat{p}_{X,Z}(x,z)}{p_{X,Z}(x,z)}\\
&=\mathcal{KL}(p_{X,Z}||\hat{p}_{X}\hat{p}_{Z})-\mathcal{KL}(p_{X,Z}||\hat{p}_{X,Z}).
	\end{split}
\end{align}
\end{proof}

From~\eqref{eq:costimperfect_0} and~\eqref{eq:costimperfect_z} in Theorem~\ref{thm:cost}, we observe that the costs for the limited adversary are higher than the respective costs for the adversary who has perfect information in Section~\ref{sec:tradeoff}. The differences in these costs depend explicitly on the biased information $\hat{p}_{X,Y}$, and as expected, when $\hat{p}_{X,Y}=p_{X,Y}$, we recover the costs introduced in  Section~\ref{sec:tradeoff}. 

To formulate the problem of interest similar to~\eqref{eq:obj}-\eqref{eq:constraintconv}, we need to further impose assumptions on $\hat{p}_{X,Y}$ so that the information leakage $\hat{L}$ is bounded below.

\begin{lema}[Finite costs]\label{lemma:infinitecosts}
Given joint distributions $p_{X,Y}$ and $\hat{p}_{X,Y}$ that satisfy Assumption~\ref{assump:corpri} and~\ref{assump:supportad}\ref{assump:ymarginal}, respectively, the following statements hold:
\begin{enumerate}[label=(\roman*)]
\item\label{itm:cost0} the average cost $\hat{c}^*_0$ in~\eqref{eq:costimperfect_0} is finite if and only if $\hat{p}_{X,Y}$ further satisfies Assumption~\ref{assump:supportad}\ref{assump:xmarginal};
\item\label{itm:costz} the average cost $\hat{c}_Z$ in~\eqref{eq:costimperfect_z} is finite under any privacy mapping $p_{Z|Y}(z|y)$ if and only if $\hat{p}_{X,Y}$ further satisfies Assumption~\ref{assump:supportad}\ref{assump:joint}.
\item\label{itm:leakage} the information leakage $\hat{L}$ in~\eqref{eq:leakageimperfect} is well-defined and finite under any privacy mapping $p_{Z|Y}$ if and only if $\hat{p}_{X,Y}$ further satisfies Assumption~\ref{assump:supportad}\ref{assump:joint}.
\end{enumerate}
\end{lema}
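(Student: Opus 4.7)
The plan is to prove the three claims in order, exploiting the closed-form expressions from Theorem~\ref{thm:cost} and isolating the support conditions under which each KL divergence that appears is finite. I would note at the outset (and use without further comment) the easy implication that Assumption~\ref{assump:supportad}\ref{assump:joint} forces Assumption~\ref{assump:supportad}\ref{assump:xmarginal} and Assumption~\ref{assump:supportad}\ref{assump:ymarginal}: for any $x\in\mathcal{X}$, Assumption~\ref{assump:corpri} gives some $y$ with $p_{X,Y}(x,y)>0$, so \ref{assump:supportad}\ref{assump:joint} yields $\hat{p}_{X,Y}(x,y)>0$ and hence $\hat{p}_X(x)>0$; the argument for $\hat{p}_Y$ is symmetric. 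This lets me use parts (i) and (ii) freely in part (iii).

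For \ref{itm:cost0}, I would start from the closed form ${\hat{c}_0}^* = -\sum_{x\in\mathcal{X}} p_X(x) \log \hat{p}_X(x)$. Since $p_X(x)>0$ for every $x$ by Assumption~\ref{assump:corpri}, each summand is finite iff $\hat{p}_X(x)>0$, and infinite ($+\infty$) iff $\hat{p}_X(x)=0$. Hence ${\hat{c}_0}^*<\infty$ iff $\hat{p}_X$ has full support on $\mathcal{X}$, which is exactly Assumption~\ref{assump:supportad}\ref{assump:xmarginal}. This is a short step and carries no real obstacle.

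For \ref{itm:costz}, because $H(X|Z)\le\log|\mathcal{X}|$ is always finite, finiteness of $\hat{c}_Z$ reduces to finiteness of $\mathcal{KL}(p_{X|Z}\|\hat{p}_{X|Z})=\sum_{x,z} p_{X,Z}(x,z)\log\bigl[p_{X|Z}(x|z)/\hat{p}_{X|Z}(x|z)\bigr]$, which in turn is finite iff $p_{X,Z}(x,z)>0\Rightarrow \hat{p}_{X,Z}(x,z)>0$ (Lemma~\ref{lemma:wellposedness} already guarantees $\hat{p}_{X|Z}$ is well-defined on the support of $p_Z$). Using the Markov structure $X-Y-Z$ implicit in the privacy mapping, $p_{X,Z}(x,z)=\sum_y p_{X,Y}(x,y)p_{Z|Y}(z|y)$ and $\hat{p}_{X,Z}(x,z)=\sum_y \hat{p}_{X,Y}(x,y)p_{Z|Y}(z|y)$. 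For sufficiency, if $p_{X,Z}(x,z)>0$ then some $y$ gives $p_{X,Y}(x,y)p_{Z|Y}(z|y)>0$, and Assumption~\ref{assump:supportad}\ref{assump:joint} promotes this to $\hat{p}_{X,Y}(x,y)p_{Z|Y}(z|y)>0$, forcing $\hat{p}_{X,Z}(x,z)>0$. For necessity, pick a violating pair $(x',y')$ with $p_{X,Y}(x',y')>0$ and $\hat{p}_{X,Y}(x',y')=0$, and choose the identity mapping $p_{Z|Y}(z|y)=\mathbbm{1}\{z=y\}$, which is available since $\mathcal{Y}\subset\mathcal{Z}$: then $p_{X,Z}(x',y')=p_{X,Y}(x',y')>0$ while $\hat{p}_{X,Z}(x',y')=\hat{p}_{X,Y}(x',y')=0$, sending the KL term to $+\infty$.

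For \ref{itm:leakage}, sufficiency is immediate from \ref{itm:cost0} and \ref{itm:costz} together with the observation that \ref{assump:supportad}\ref{assump:joint} implies \ref{assump:supportad}\ref{assump:xmarginal} and \ref{assump:supportad}\ref{assump:ymarginal}, so both ${\hat{c}_0}^*$ and $\hat{c}_Z$ are finite and their difference $\hat{L}$ is well-defined and finite. For necessity I would reuse the identity mapping construction from \ref{itm:costz}: when \ref{assump:supportad}\ref{assump:joint} fails, the term $\mathcal{KL}(p_{X,Z}\|\hat{p}_{X,Z})$ in~\eqref{eq:leakageimperfect} is $+\infty$. I then split cases on whether \ref{assump:supportad}\ref{assump:xmarginal} holds: if it does, the other term $\mathcal{KL}(p_{X,Z}\|\hat{p}_X\hat{p}_Z)$ is finite under the identity mapping (since $\hat{p}_X,\hat{p}_Y$ are fully supported by \ref{assump:supportad}\ref{assump:xmarginal} and \ref{assump:supportad}\ref{assump:ymarginal}, the latter being a consequence of well-posedness we have already assumed), giving $\hat{L}=-\infty$; if it fails, both terms are $+\infty$ and $\hat{L}$ takes the indeterminate form $\infty-\infty$. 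Either way, $\hat{L}$ is not well-defined and finite, completing the contrapositive. The main subtlety I expect is this last case split to ensure $\hat{L}$ is pathological whenever \ref{assump:supportad}\ref{assump:joint} fails, rather than merely verifying one KL is infinite; the identity privacy mapping is the convenient unifier that makes the argument work.
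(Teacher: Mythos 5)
Your proof is correct and follows essentially the same route as the paper's: reduce each claim to finiteness of the relevant KL divergences and, for necessity, exhibit a privacy mapping under which some $z'$ is reachable only from the violating $y'$ (your identity mapping is a concrete instance of the paper's construction). The only cosmetic difference is in part \ref{itm:leakage}, case two, where you argue that $\hat{L}$ is the indeterminate form $\infty-\infty$ while the paper points out that $\hat{p}_{Z|X}$ itself is undefined; both reach the same conclusion.
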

\begin{proof}
Regarding~\ref{itm:cost0}, it follows directly from~\eqref{eq:costimperfect_0} and the definition of KL divergence.

Regarding~\ref{itm:costz}, since the conditional entropy $H(X|Z)$ in~\eqref{eq:costimperfect_z} satisfies $H(X|Z)\leq H(X)$ and is always finite, $\hat{c}_Z$ is finite if and only if $\mathcal{KL}(p_{X|Z}||\hat{p}_{X|Z})$ is finite. We expand $\mathcal{KL}(p_{X|Z}||\hat{p}_{X|Z})$ as
\begin{align}\label{eq:expandKL}
	\begin{split}
		\mathcal{KL}(p_{X|Z}||\hat{p}_{X|Z})&=\sum_{x,z}p_{X,Z}(x,z)\log\frac{p_{X|Z}(x|z)}{\hat{p}_{X|Z}(x|z)}\\
		&=\underbrace{\sum_{x,z}p_{X,Z}(x,z)\log\frac{p_{X,Z}(x,z)}{p_Z(z)}}_\text{first term}\\
		&\quad-\underbrace{\sum_{x,z}{p}_{X,Z}(x,z)\log\frac{\hat{p}_{X,Z}(x,z)}{\hat{p}_Z(z)}}_\text{second term},
	\end{split}
\end{align}
where we only care about $z\in\mathcal{Z}$ such that $p_{Z}(z)\neq0$ and $\hat{p}_{Z}(z)\neq0$ (otherwise the posteriors $p_{X|Z}$ and $\hat{p}_{X|Z}$ will not be calculated). Note that the first term in~\eqref{eq:expandKL} is always finite, and therefore $\hat{c}_Z$ is finite if and only if the second term in~\eqref{eq:expandKL} is finite. We further expand the second term in~\eqref{eq:expandKL} and show its explicit dependence on the privacy mapping $p_{Z|Y}$ as
\begin{multline}\label{eq:secondterm}
\sum_{x,z}{p}_{X,Z}(x,z)\log\frac{\hat{p}_{X,Z}(x,z)}{\hat{p}_Z(z)}\\=\sum_{x,y,z}{p}_{X,Y}(x,y){p}_{Z|Y}(z|y)\\
	\cdot\log\frac{\sum_{y\in\mathcal{Y}}\hat{p}_{X,Y}(x,y){p}_{Z|Y}(z|y)}{\hat{p}_Z(z)}.
\end{multline}

\emph{Necessity} (finiteness$\implies$Assumption~\ref{assump:supportad}\ref{assump:joint}):  We prove by contrapositive. Suppose that $\hat{p}_{X,Y}$ does not satisfy Assumption~\ref{assump:supportad}\ref{assump:joint}, i.e., there exist $x'\in\mathcal{X}$ and $y'\in\mathcal{Y}$ such that ${p}_{X,Y}(x',y')>0$ and $\hat{p}_{X,Y}(x',y')=0$. For a $z'\in\mathcal{Z}$, let the privacy mapping satisfy $p_{Z|Y}(z'|y')>0$ and $p_{Z|Y}(z'|y'')=0$ for $y''\neq y'$, then
\begin{equation}\label{eq:negativeinf1}
	\sum_{y\in\mathcal{Y}}{p}_{X,Y}(x',y){p}_{Z|Y}(z'|y)={p}_{X,Y}(x',y'){p}_{Z|Y}(z'|y')>0,
\end{equation}
and
\begin{equation}\label{eq:negativeinf2}
	\sum_{y\in\mathcal{Y}}\hat{p}_{X,Y}(x',y'){p}_{Z|Y}(z'|y')=\hat{p}_{X,Y}(x',y'){p}_{Z|Y}(z'|y')=0,
\end{equation}
in which case \eqref{eq:secondterm}  is negative infinite.

\emph{Sufficiency} (Assumption~\ref{assump:supportad}\ref{assump:joint}$\implies$finiteness): Suppose $\hat{p}_{X,Y}$ satisfies Assumption~\ref{assump:supportad}\ref{assump:joint}. If for some $x'\in\mathcal{X}$ and $z'\in\mathcal{Z}$, we have that
\begin{equation}\label{eq:condition2}
	\sum_{y\in\mathcal{Y}}\hat{p}_{X,Y}(x',y){p}_{Z|Y}(z'|y)=0,
\end{equation}
then we must have $\hat{p}_{X,Y}(x',y){p}_{Z|Y}(z'|y)=0$ for any $y\in\mathcal{Y}$. Assumption~\ref{assump:supportad}\ref{assump:joint} ensures that ${p}_{X,Y}(x',y){p}_{Z|Y}(z'|y)=0$ for any $y\in\mathcal{Y}$. Therefore, \eqref{eq:secondterm} is finite.

Regarding~\ref{itm:leakage}, \emph{Necessity} (Well-posedness $\implies$Assumption~\ref{assump:supportad}\ref{assump:joint}): We prove by contrapositive. Suppose $\hat{p}_{X,Y}$ does not satisfy Assumption~\ref{assump:supportad}\ref{assump:joint}. We further consider two scenarios
\begin{enumerate}
\item if $\hat{p}_{X,Y}$ satisfies Assumption~\ref{assump:supportad}\ref{assump:xmarginal}, then by~\ref{itm:cost0} and \ref{itm:costz}, $\hat{c}_Z$ is infinite and $\hat{c}_0^*$ is finite, and thus $\hat{L}$ is infinite;

\item if $\hat{p}_{X,Y}$ does not satisfy Assumption~\ref{assump:supportad}\ref{assump:xmarginal}, i.e., there exists at least one $x\in\mathcal{X}$ such that $\hat{p}_X(x)=0$, then from~\eqref{eq:leakgead} we have
\begin{equation}
\hat{L}=\sum_{x,z}p_{X,Z}(x,z)\log \frac{\hat{p}_{Z|X}(z|x)}{\hat{p}_Z(z)},
\end{equation}
where $\hat{p}_{Z|X}(z|x)$ is not defined. 
\end{enumerate}

\emph{Sufficiency} (Assumption~\ref{assump:supportad}\ref{assump:joint} $\implies$Well-posedness): if $\hat{p}_{X,Y}$ satisfies Assumption~\ref{assump:supportad}\ref{assump:joint}, then by~\ref{itm:cost0} and \ref{itm:costz}, both $\hat{c}_0^*$ and $\hat{c}_Z$ are finite, and therefore $\hat{L}$ is finite and well-defined.
\end{proof}

In light of the discussions in Lemma~\ref{lemma:infinitecosts}, it is clear that the privacy-utility trade-off problem is only interesting when $\hat{p}_{X,Y}$ satisfies~Assumption~\ref{assump:supportad}\ref{assump:joint}. We formally state the problem of interest as follows.

\begin{problem}[Privacy-utility trade-offs against a limited adversary]\label{prob:privacyutilityad}
Given joint distributions $p_{X,Y}$ and $\hat{p}_{X,Y}$ that satisfy Assumption~\ref{assump:corpri} and~\ref{assump:supportad}\ref{assump:joint}, respectively, find a privacy mapping $p_{Z|Y}$ such that the information leakage $\hat{L}$ is minimized under a utility loss constraint, i.e., solve the following optimization problem
\begin{subequations}\label{eq:utilityprivacyad}
	\begin{align}
		\minimize_{p_{Z|Y}} \quad & \hat{L}\label{eq:objad}\\
		\st \quad & \E_{Y,Z}[d(y,z)] \leq\delta,\label{eq:constraintad}\\
		&p_{Z|Y}(\cdot|y)\in\Delta_{|\mathcal{Z}|},\quad\forall y\in\mathcal{Y}.\label{eq:simplexad}
	\end{align}
\end{subequations}
\end{problem}

\begin{remark}[Optimal solution in the absence of Assumption~\ref{assump:supportad}\ref{assump:joint}]
	In the case when $\hat{p}_{X,Y}$ satisfies Assumption~\ref{assump:supportad}\ref{assump:xmarginal} but not \ref{assump:supportad}\ref{assump:joint}, by Lemma~\ref{lemma:infinitecosts}\ref{itm:costz} and \ref{itm:leakage}, the information leakage $\hat{L}$ can be negative infinity. In fact, as long as the constraint set defined by \eqref{eq:constraintad} and \eqref{eq:simplexad} is nonempty, there always exists a feasible privacy mapping that achieves negative infinite information leakage. Specifically, let $p_{Z|Y}$ be a feasible privacy mapping, and $x'\in\mathcal{X}$ and $y'\in\mathcal{Y}$ be such that $p_{X,Y}(x',y')>0$ and $\hat{p}_{X,Y}(x',y')=0$. We construct a new privacy mapping $\tilde{p}_{Z|Y}$ by following the steps below
	\begin{enumerate}
	\item $\tilde{p}_{Z|Y}(z|y)\gets p_{Z|Y}(z|y)$ for all $y\in\mathcal{Y}$ and $z\in\mathcal{Z}$;
	\item $\tilde{p}_{Z|Y}(y'|y')\gets1$ and $\tilde{p}_{Z|Y}(z|y')\gets0$ for any $z\neq y'$;
	\item $\tilde{p}_{Z|Y}(y''|y'')\gets p_{Z|Y}(y''|y'')+p_{Z|Y}(y'|y'')$ for $y''\neq y'$; 
	\item $\tilde{p}_{Z|Y}(y'|y'')\gets 0$ for $y''\neq y'$.
	\end{enumerate}	
	By construction, $\tilde{p}_{Z|Y}$ is still a valid conditional distribution since the distortion between $Y$ and $Z$ decreases under $\tilde{p}_{Z|Y}$, i.e., $\tilde{p}_{Z|Y}$ satisfies  \eqref{eq:constraintad} and \eqref{eq:simplexad}. On the other hand, by~\eqref{eq:negativeinf1} and~\eqref{eq:negativeinf2}, the information leakage under the modified privacy mapping $\tilde{p}_{Z|Y}$ is negative infinite.
\end{remark}

Unlike the case when the adversary knows perfectly the correlation $p_{X,Y}$, releasing information $Z$ might lead to privacy enhancement against a limited adversary since $Z$ might be misleading from the adversary's perspective. In other words, the information leakage $\hat{L}$ is sign-indefinite and could be negative. On the other hand, although a biased prior leads to higher initial and posterior costs for the adversary as shown in~\eqref{eq:costimperfect_0} and~\eqref{eq:costimperfect_z}, it does not necessarily lead to lower information leakage $\hat{L}$.  The following example illustrates these scenarios. 
\begin{example}[Information leakage against limited adversaries]\label{exam:infoleak}
	 For adversaries with perfect information $p_{X,Y}$ and biased information $\hat{p}_{X,Y}$, the difference in information leakage is
\begin{align}
	\begin{split}\label{eq:diff_info_leak}
L-\hat{L}&=\sum_{x,z}p_{X,Z}(x,z)\log\frac{p_{X,Z}(x,z)}{p_{X}(x)p_Z(z)}\\
&\quad-\sum_{x,z}p_{X,Z}(x,z)\log \frac{\hat{p}_{X,Z}(x,z)}{ \hat{p}_X(x)\hat{p}_Z(z)}\\
&=\mathcal{KL}(p_{X,Z}||\hat{p}_{X,Z})-\mathcal{KL}(p_{X}||\hat{p}_{X})-\mathcal{KL}(p_{Z}||\hat{p}_{Z}).
	\end{split}
\end{align} 
Consider
\begin{center}
	\begin{tabular}{ c|c c }
		$p_{X,Y}$ & $y_1$ & $y_2$ \\ \hline
		$x_1$ & $0.4$ & $0.1$ \\  
		$x_2$ & $0.1$ & $0.4$    
	\end{tabular}\qquad
	\begin{tabular}{ c|c c }
	$p_{Z|Y}$ & $z_1$ & $z_2$ \\ \hline
	$y_1$ & $0.8$ & $0.2$ \\  
	$y_2$ & $0.2$ & $0.8$    
\end{tabular}
\end{center}
and two biased correlations corresponding to two limited adversaries
\begin{center}
\begin{tabular}{ c|c c }
	$\hat{p}_{X,Y}^1$ & $y_1$ & $y_2$ \\ \hline
	$x_1$ & $0.3$ & $0.4$ \\  
	$x_2$ & $0.2$ & $0.1$    
\end{tabular}
\qquad
\begin{tabular}{ c|c c }
	$\hat{p}_{X,Y}^2$ & $y_1$ & $y_2$ \\ \hline
	$x_1$ & $0.3$ & $0.1$ \\  
	$x_2$ & $0.5$ & $0.1$    
\end{tabular}
\end{center}
We calculate the initial costs by~\eqref{eq:costimperfect_0}, the posterior costs by~\eqref{eq:costimperfect_z}, the information leakage by~\eqref{eq:leakageimperfect}, and the differences of information leakage by~\eqref{eq:diff_info_leak}, and we report the results in Table~\ref{tb:costsinfoleak}.
\begin{table}[http]
	\centering
	\begin{tabular}{ c@{\hspace{0.75\tabcolsep}}|@{\hspace{0.75\tabcolsep}}c @{\hspace{1.5\tabcolsep}}c@{\hspace{1.5\tabcolsep}} c@{\hspace{1.5\tabcolsep}} c@{\hspace{1.5\tabcolsep}} c@{\hspace{1.5\tabcolsep}} c@{\hspace{1.5\tabcolsep}} c}
		&$c_0$ &$\hat{c}_0$ & $c_Z$ &$\hat{c}_Z$& $L$ &$\hat{L}$ &$L-\hat{L}$\\ \hline
		$p_{X,Y}$ & $0.693$ &-& $0.627$ &-&$0.066$&-&-\\  
		 $\hat{p}^1_{X,Y}$ &-&$0.693$& -& $0.844$  &-&$-0.151$ &$0.217$\\
		$\hat{p}^2_{X,Y}$ &-&$0.916$&-& $0.732$ &-&$0.184$ &$-0.118$
	\end{tabular}\caption{Costs and information leakage for different adversaries}\label{tb:costsinfoleak}
\end{table}

From Table~\ref{tb:costsinfoleak}, it is clear that a biased prior always leads to costs no smaller than the omniscience case, which is consistent with~\eqref{eq:costimperfect_0} and~\eqref{eq:costimperfect_z}. Moreover, the information leakage is no longer necessarily nonnegative when the adversary's information is inaccurate. On the other hand, the comparison of information leakage shows that a limited adversary could have higher or lower information leakage depending on the adopted biased prior.
\end{example}

The comparison of information leakage in Example~\ref{exam:infoleak} may give the impression that a limited adversary might even have better inference performance (resulting in a larger information leakage for the user). However, we emphasize that information leakage is not a fair metric when comparing different adversaries. In fact, the adversary knowing a different correlation $\hat{p}_{X,Y}\neq p_{X,Y}$ could have very high initial costs, which leads to misleadingly low information leakage ($\hat{p}^2_{X,Y}$ in Example~\ref{exam:infoleak}). On the other hand, the posterior costs for a limited adversary are always higher than those for the omniscient ones. Therefore, we will adopt the posterior costs as the criterion when we compare adversaries with different information. Note that maximizing posterior costs for an adversary is still consistent with~\eqref{eq:obj}-\eqref{eq:constraintconv} since the private variable $X$ is fixed in~\eqref{eq:obj}-\eqref{eq:constraintconv}.

\subsection{Solution based on concave-convex procedure}
It has been shown in~\cite[Theorem 1]{FduPC-NF:12} that \eqref{eq:obj}-\eqref{eq:constraintconv} is a convex optimization problem. However, when the adversary has limited information, the information leakage $\hat{L}$ is in general not convex, as hinted by \eqref{eq:leakageimperfect}. Fortunately, since the KL divergence is a convex function~\cite[Theorem 2.7.2]{TMC-JAT:12},  the information leakage $\hat{L}$ in~\eqref{eq:leakageimperfect} is a difference of convex functions. Therefore, Problem~\ref{prob:privacyutilityad} is a difference of convex functions (DC) program with polyhedral constraints, and we solve it via the concave-convex procedure (CCCP) \cite{BKS-GRGL:09,HALeT-TPD:18}.

\begin{thom}[General solution as DC programming]
The objective function in Problem~\ref{prob:privacyutilityad} is a difference of convex functions.
\end{thom}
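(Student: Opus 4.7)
The plan is to start from the identity in~\eqref{eq:leakageimperfect}, namely
\[
\hat{L} = \mathcal{KL}(p_{X,Z}\|\hat{p}_X\hat{p}_Z) - \mathcal{KL}(p_{X,Z}\|\hat{p}_{X,Z}),
\]
and to exhibit each of the two summands as a convex function of the design variable $p_{Z|Y}$. Since the KL divergence $\mathcal{KL}(u\|v)$ is jointly convex in the pair $(u,v)$ \cite[Theorem 2.7.2]{TMC-JAT:12}, it suffices to verify that the arguments of both KL divergences depend \emph{affinely} on $p_{Z|Y}$, because the composition of a jointly convex function with an affine map is convex.

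The first step is the bookkeeping of affine dependence. By marginalization,
\[
p_{X,Z}(x,z)=\sum_{y\in\mathcal{Y}} p_{X,Y}(x,y)\, p_{Z|Y}(z|y),\qquad
\hat{p}_{X,Z}(x,z)=\sum_{y\in\mathcal{Y}} \hat{p}_{X,Y}(x,y)\, p_{Z|Y}(z|y),
\]
and similarly $\hat{p}_Z(z)=\sum_{y\in\mathcal{Y}} \hat{p}_Y(y)\, p_{Z|Y}(z|y)$, while $\hat{p}_X(x)$ does not depend on $p_{Z|Y}$. Hence all three quantities $p_{X,Z}$, $\hat{p}_{X,Z}$ and $\hat{p}_X\hat{p}_Z$ (the last being a constant times a linear functional of $p_{Z|Y}$) are affine in $p_{Z|Y}$.

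The second step is simply to invoke joint convexity of KL divergence twice. The map $p_{Z|Y}\mapsto \bigl(p_{X,Z},\,\hat{p}_X\hat{p}_Z\bigr)$ is affine, so $p_{Z|Y}\mapsto \mathcal{KL}(p_{X,Z}\|\hat{p}_X\hat{p}_Z)$ is convex; analogously, $p_{Z|Y}\mapsto \mathcal{KL}(p_{X,Z}\|\hat{p}_{X,Z})$ is convex. Writing $\hat{L}=g(p_{Z|Y})-h(p_{Z|Y})$ with $g$ and $h$ the two convex terms above completes the argument. Under Assumptions~\ref{assump:corpri} and \ref{assump:supportad}\ref{assump:joint}, Lemma~\ref{lemma:infinitecosts} guarantees that both $g$ and $h$ take finite values on the feasible set, so the DC representation is genuine rather than a cancellation of infinities.

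I expect no serious obstacle: everything reduces to noticing the affine parametrization and quoting joint convexity of KL divergence. The only mild care required is to make sure the well-posedness conditions from Lemma~\ref{lemma:infinitecosts} prevent either convex piece from blowing up on the feasible set of \eqref{eq:utilityprivacyad}, which is exactly what Assumption~\ref{assump:supportad}\ref{assump:joint} delivers.
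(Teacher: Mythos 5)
Your proposal is correct and follows essentially the same route as the paper's own proof: both arguments rest on the joint convexity of the KL divergence \cite[Theorem 2.7.2]{TMC-JAT:12} composed with the affine dependence of $p_{X,Z}$, $\hat{p}_{X,Z}$, and $\hat{p}_X\hat{p}_Z$ on $p_{Z|Y}$ (with $\hat{p}_X$ constant). The extra remark on finiteness via Lemma~\ref{lemma:infinitecosts} is a harmless refinement the paper leaves implicit.
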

\begin{proof}
Note that the KL divergence is a convex function~\cite[Theorem 2.7.2]{TMC-JAT:12} in both of its arguments, the convexity of the two terms in~\eqref{eq:leakageimperfect} then follows from the facts that: a) $p_{X,Z}$, $\hat{p}_{X,Z}$ and $\hat{p}_{Z}$ are linear functions of $p_{Z|Y}$; b) $\hat{p}_X$ is constant.
\end{proof}

We present the CCCP that solves Problem~\ref{prob:privacyutilityad} in  Algorithm~\ref{alg:DC}, where the matrix variable $\mathbf{M}\in\mathbb{R}^{|\mathcal{Y}|\times|\mathcal{Z}|}$ denotes the matrix representation of the conditional distribution $p_{Z|Y}$ and each row of $\mathbf{M}$ is a probability vector, $f(\mathbf{M})$ and $g(\mathbf{M})$ denote the KL divergence $\mathcal{KL}(p_{X,Z}||\hat{p}_{X}\hat{p}_{Z})$ and $\mathcal{KL}(p_{X,Z}||\hat{p}_{X,Z})$, respectively, and the constraint set $\mathcal{C}$ in line \ref{line:convexp} consists of constraints \eqref{eq:constraintad} and \eqref{eq:simplexad}. The basic idea of CCCP is as follows: at each step, the second term $\mathcal{KL}(p_{X,Z}||\hat{p}_{X,Z})$ in~\eqref{eq:leakageimperfect} is linearized around the current solution so that the resulting programming becomes convex; then, this convex problem is solved to optimality via some efficient algorithms for convex problems, e.g., the interior-point methods; the process iterates until the solution converges or a maximum number of iterations is reached. We note that since the constraint set in our problem is compact, the limit points of the sequence of solutions $\{\mathbf{M}_k\}$ obtained by Algorithm~\ref{alg:DC} are stationary points of Problem~\ref{prob:privacyutilityad} \cite[Theorem 4]{BKS-GRGL:09}.

In order to implement Algorithm~\ref{alg:DC}, we need to calculate the gradient of $g(\mathbf{M})$ as in line~\ref{line:gradient}. We provide the explicit form of the gradient in the following lemma.

\begin{lema}[Gradient of the KL divergence]
Given joint distributions $p_{X,Y}$ and $\hat{p}_{X,Y}$ that satisfy Assumption~\ref{assump:corpri} and~\ref{assump:supportad}\ref{assump:joint}, respectively. Let $\mathbf{P}\in\mathbb{R}^{|\mathcal{X}|\times|\mathcal{Y}|}$ and $\hat{\mathbf{P}}\in\mathbb{R}^{|\mathcal{X}|\times|\mathcal{Y}|}$ be the matrix representations of $p_{X,Y}$ and $\hat{p}_{X,Y}$, respectively. Then, the gradient of $\mathcal{KL}(p_{X,Z}||\hat{p}_{X,Z})$ with respect to the privacy mapping $\mathbf{M}\in\mathbb{R}^{|\mathcal{Y}|\times|\mathcal{Z}|}$ is given by
\begin{equation}\label{eq:gradient}
\frac{\partial \mathcal{KL}(p_{X,Z}||\hat{p}_{X,Z})}{\partial \mathbf{M}}=\mathbf{P}^\top(\log\mathbf{W}+\mathbf{1}_{|\mathcal{X}|\times|\mathcal{Z}|})-\hat{\mathbf{P}}^\top\mathbf{W},
\end{equation}
where the logarithm is component-wise and 
\begin{equation*}
\mathbf{W}=(\mathbf{P}\mathbf{M})\oslash(\hat{\mathbf{P}}\mathbf{M}).
\end{equation*}
\end{lema}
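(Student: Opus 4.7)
The plan is a direct componentwise computation followed by matrix reassembly. First, I would rewrite the objects in the KL divergence as matrix products: because $p_{X,Z}(x,z)=\sum_{y}p_{X,Y}(x,y)p_{Z|Y}(z|y)$, we have $p_{X,Z}(x,z)=[\mathbf{PM}]_{x,z}$, and analogously $\hat{p}_{X,Z}(x,z)=[\hat{\mathbf{P}}\mathbf{M}]_{x,z}$. Setting $A_{x,z}=[\mathbf{PM}]_{x,z}$ and $B_{x,z}=[\hat{\mathbf{P}}\mathbf{M}]_{x,z}$, we can write
\begin{equation*}
\mathcal{KL}(p_{X,Z}\|\hat{p}_{X,Z})=\sum_{x,z}A_{x,z}\log A_{x,z}-\sum_{x,z}A_{x,z}\log B_{x,z}.
\end{equation*}
By Assumption~\ref{assump:supportad}\ref{assump:joint} and Lemma~\ref{lemma:infinitecosts}\ref{itm:costz}, the quantity is finite and the ratios $A_{x,z}/B_{x,z}$ appearing below are well-defined whenever $A_{x,z}>0$, so the differentiation can be carried out in the interior of the domain.

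Next, I would differentiate entrywise with respect to $M_{y,z}$. Since $\partial A_{x,z'}/\partial M_{y,z}=P_{x,y}\,\mathbbm{1}[z'=z]$ and $\partial B_{x,z'}/\partial M_{y,z}=\hat{P}_{x,y}\,\mathbbm{1}[z'=z]$, only the summands with second index $z$ contribute, giving
\begin{equation*}
\frac{\partial\mathcal{KL}}{\partial M_{y,z}}=\sum_{x}P_{x,y}(\log A_{x,z}+1)-\sum_{x}P_{x,y}\log B_{x,z}-\sum_{x}\hat{P}_{x,y}\frac{A_{x,z}}{B_{x,z}}.
\end{equation*}
Grouping the first two sums and using $\sum_{x}P_{x,y}\log(A_{x,z}/B_{x,z})=\sum_{x}P_{x,y}\log W_{x,z}$ with $W_{x,z}=A_{x,z}/B_{x,z}=[\mathbf{PM}]_{x,z}/[\hat{\mathbf{P}}\mathbf{M}]_{x,z}$, the expression collapses to
\begin{equation*}
\frac{\partial\mathcal{KL}}{\partial M_{y,z}}=\sum_{x}P_{x,y}\bigl(\log W_{x,z}+1\bigr)-\sum_{x}\hat{P}_{x,y}W_{x,z}.
\end{equation*}

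Finally, I would reassemble the entries into matrix form. The first sum is the $(y,z)$ entry of $\mathbf{P}^{\top}(\log\mathbf{W}+\mathbf{1}_{|\mathcal{X}|\times|\mathcal{Z}|})$ and the second is the $(y,z)$ entry of $\hat{\mathbf{P}}^{\top}\mathbf{W}$, which yields \eqref{eq:gradient}. The only subtle point, and what I expect to be the mildly tricky step rather than a genuine obstacle, is bookkeeping the product-rule contribution from differentiating $A_{x,z}\log B_{x,z}$ through \emph{both} factors and observing that the $+1$ term arises cleanly from $\sum_{x}P_{x,y}$ appearing inside $\mathbf{P}^{\top}\mathbf{1}_{|\mathcal{X}|\times|\mathcal{Z}|}$ after the cancellation in the log terms; once that step is handled, the rest is routine matrix manipulation.
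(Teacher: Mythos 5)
Your proposal is correct and follows essentially the same route as the paper: a direct differentiation of $\sum_{x,z}[\mathbf{PM}]_{x,z}\log([\mathbf{PM}]_{x,z}/[\hat{\mathbf{P}}\mathbf{M}]_{x,z})$ via the product and chain rules, with the paper organizing the computation column-by-column in $\mathbf{M}_{*,z}$ while you work entry-by-entry in $M_{y,z}$ before reassembling. The bookkeeping of the product-rule term and the origin of the $+1$ are handled exactly as in the paper's proof.
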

\begin{proof}
We expand the KL divergence $\mathcal{KL}(p_{X,Z}||\hat{p}_{X,Z})$ as 
\begin{equation*}
	\mathcal{KL}(p_{X,Z}||\hat{p}_{X,Z})=\sum_{x,z}\mathbf{P}_{x,*}^\top\mathbf{M}_{*,z}\log\frac{\mathbf{P}_{x,*}^\top\mathbf{M}_{*,z}}{\hat{\mathbf{P}}_{x,*}^\top\mathbf{M}_{*,z}}.
\end{equation*}
Fix an observation $z$, we have that 
\begin{align}\label{eq:gradientcolumn}
	\begin{split}
	&\quad\frac{\partial \mathcal{KL}(p_{X,Z}||\hat{p}_{X,Z})}{\partial \mathbf{M}_{*,z}}\\
&=\sum_{x\in\mathcal{X}}\big(\mathbf{P}_{x,*}\cdot(\log(\frac{\mathbf{P}_{x,*}^\top\mathbf{M}_{*,z}}{\hat{\mathbf{P}}_{x,*}^\top\mathbf{M}_{*,z}})+1)-\hat{\mathbf{P}}_{x,*}\cdot\frac{\mathbf{P}_{x,*}^\top\mathbf{M}_{*,z}}{\hat{\mathbf{P}}_{x,*}^\top\mathbf{M}_{*,z}}\big)\\
&=\sum_{x\in\mathcal{X}}\big(\mathbf{P}_{x,*}\cdot(\log(\mathbf{W}_{x,z})+1)-\hat{\mathbf{P}}_{x,*}\cdot\mathbf{W}_{x,z}\big)\\
&=\mathbf{P}^\top(\log(\mathbf{W}_{*,z})+\mathbf{1}_{|\mathcal{X}|})-\hat{\mathbf{P}}^\top\mathbf{W}_{*,z}\big).
	\end{split}
\end{align}
Then, we obtain \eqref{eq:gradient} by concatenating the gradients with respect to each column of  $\mathbf{M}$ in~\eqref{eq:gradientcolumn}.

\end{proof}

\begin{algorithm}
	\caption{DC programming for Problem~\ref{prob:privacyutilityad}}\label{alg:DC}
	\begin{algorithmic}[1]
		\renewcommand{\algorithmicrequire}{\textbf{Input:}}
		\renewcommand{\algorithmicensure}{\textbf{Output:}}
		\REQUIRE Joint distributions $p_{X,Y}(x,y)$ and $\hat{p}_{X,Y}(x,y)$ that satisfy Assumption~\ref{assump:corpri} and~\ref{assump:supportad}\ref{assump:joint}, respectively
		\renewcommand{\algorithmicrequire}{\textbf{Parameters:}}
		\REQUIRE the error tolerance $\epsilon>0$, the maximum number of iterations \texttt{MaxIter}\\				
		\ENSURE The optimal privacy mapping in Problem~\ref{prob:privacyutilityad}
		\renewcommand{\algorithmicrequire}{\textbf{Initialize:}}
		\REQUIRE $\mathbf{M}_0\in\mathbb{R}^{|\mathcal{Y}|\times|\mathcal{Z}|}$  satisfying~\eqref{eq:constraintad}-\eqref{eq:simplexad}, $k\gets0$\\
		\WHILE{\texttt{TRUE}}
			\STATE\label{line:gradient} Compute $\mathbf{G}_k\gets\nabla g(\mathbf{M}_k)$ via~\eqref{eq:gradient}
			\STATE\label{line:convexp} Compute $\mathbf{M}_{k+1}\gets\argmin_{\mathbf{M}\in\mathcal{C}}\{f(\mathbf{M})-\mathbf{1}_{|\mathcal{Y}|}^\top(\mathbf{G}_k\odot\mathbf{M})\mathbf{1}_{\mathcal{Z}}\}$
			\IF{$\|\mathbf{M}_{k+1}-\mathbf{M}_{k}\|_F\leq\epsilon$ or $k>\texttt{MaxIter}$}
		 	\RETURN $\mathbf{M}_{k+1}$
			\ENDIF
			\STATE $k\gets k+1$
			\ENDWHILE
	\end{algorithmic}
\end{algorithm}

\subsection{Sufficient conditions for convexity of the objective function}
When an adversary has perfect information, i.e., $\hat{p}_{X,Y}=p_{X,Y}$,  Problem~\ref{prob:privacyutilityad} is convex. In this subsection, we derive a sufficient condition on $\hat{p}_{X,Y}$ so that Problem~\ref{prob:privacyutilityad} remains convex. Then, we can apply interior-point methods to find a globally optimal solution when the sufficient condition is satisfied. 

\begin{thom}[Condition for convexity of the information leakage minimization]\label{thm:convexity}
	Given joint distributions $p_{X,Y}$ and $\hat{p}_{X,Y}$ that satisfy Assumption~\ref{assump:corpri} and~\ref{assump:supportad}\ref{assump:joint}, respectively. Let $\mathbf{P}\in\mathbb{R}^{|\mathcal{X}|\times|\mathcal{Y}|}$ and $\hat{\mathbf{P}}\in\mathbb{R}^{|\mathcal{X}|\times|\mathcal{Y}|}$ be the matrix representations of $p_{X,Y}$ and $\hat{p}_{X,Y}$, respectively. If
	for all $x\in\mathcal{X}$,
	 \begin{align}\label{eq:convexitycondition}
	 	\begin{split}
	(2\mathbf{P}_{x,*}^\top\hat{\mathbf{P}}_{x,*}-r_x^{\textup{L}}\hat{\mathbf{P}}_{x,*}^\top\hat{\mathbf{P}}_{x,*})\cdot(r_x^{\textup{L}}\hat{\mathbf{P}}_Y^\top\hat{\mathbf{P}}_Y)\geq (\mathbf{P}_{x,*}^\top\hat{\mathbf{P}}_Y)^2,\\
	(2\mathbf{P}_{x,*}^\top\hat{\mathbf{P}}_{x,*}-r_x^{\textup{U}}\hat{\mathbf{P}}_{x,*}^\top\hat{\mathbf{P}}_{x,*})\cdot(r_x^{\textup{U}}\hat{\mathbf{P}}_Y^\top\hat{\mathbf{P}}_Y)\geq (\mathbf{P}_{x,*}^\top\hat{\mathbf{P}}_Y)^2,
	 	\end{split}	
\end{align}
where $\hat{\mathbf{P}}_Y$ is the vector representing the marginal distribution of $\hat{p}_{X,Y}$ with respect to $Y$ and \begin{equation}\label{eq:ratio}
		r_{x}^{\textup{L}}=\min_{y\in\mathcal{Y}:\mathbf{P}_{x,y}>0}\frac{\mathbf{P}_{x,y}}{\hat{\mathbf{P}}_{x,y}}\quad\textup{and}\quad		r_{x}^{\textup{U}}=\max_{y\in\mathcal{Y}:\mathbf{P}_{x,y}>0}\frac{\mathbf{P}_{x,y}}{\hat{\mathbf{P}}_{x,y}},
	\end{equation}
	then Problem~\ref{prob:privacyutilityad} is a convex optimization problem.
\end{thom}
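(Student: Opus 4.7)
The plan is to verify that the Hessian of $\hat L$ with respect to $\mathbf{M}$ is positive semidefinite over the feasible region; since the constraints \eqref{eq:constraintad} and \eqref{eq:simplexad} are linear in $\mathbf{M}$, this is sufficient for Problem~\ref{prob:privacyutilityad} to be convex. The strategy rests on three ingredients: a column-wise separation of the objective, an explicit Hessian computation per column, and a reduction of the PSD condition to a concave one-dimensional quadratic inequality per private value $x$.

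First, starting from the expansion of $\hat{L}$ derived in the proof of Theorem~\ref{thm:cost}, I would write
\begin{equation*}
\hat{L} = \sum_{x,z} p_{X,Z}(x,z) \log \hat{p}_{X,Z}(x,z) - \sum_{z} p_{Z}(z)\log \hat{p}_{Z}(z) - \sum_{x}p_X(x)\log\hat{p}_X(x).
\end{equation*}
The last sum is constant in $\mathbf{M}$, and the first two sums separate across the columns $\mathbf{m}_z = \mathbf{M}_{*,z}$ because $\hat{p}_{X,Z}(x,z) = \hat{\mathbf{P}}_{x,*}^\top \mathbf{m}_z$ and analogously for the other probabilities. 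For a single column, writing $a_x = \mathbf{P}_{x,*}^\top \mathbf{m}$, $b_x = \hat{\mathbf{P}}_{x,*}^\top \mathbf{m}$, $A = \mathbf{P}_Y^\top \mathbf{m}$, $B = \hat{\mathbf{P}}_Y^\top \mathbf{m}$, the contribution is $f(\mathbf{m}) = \sum_x a_x \log b_x - A \log B$, and convexity of $\hat{L}$ in $\mathbf{M}$ reduces to convexity of $f$ in $\mathbf{m}$.

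Second, I would compute $H := \nabla^2 f$ directly via the product rule to obtain
\begin{equation*}
H = \sum_x\Bigl[\frac{\mathbf{P}_{x,*}\hat{\mathbf{P}}_{x,*}^\top + \hat{\mathbf{P}}_{x,*}\mathbf{P}_{x,*}^\top}{b_x} - \frac{a_x}{b_x^2}\hat{\mathbf{P}}_{x,*}\hat{\mathbf{P}}_{x,*}^\top\Bigr] - \frac{\mathbf{P}_Y \hat{\mathbf{P}}_Y^\top + \hat{\mathbf{P}}_Y\mathbf{P}_Y^\top}{B} + \frac{A}{B^2}\hat{\mathbf{P}}_Y\hat{\mathbf{P}}_Y^\top.
\end{equation*}
The key structural observation is that the ratio $\mu_x := a_x/b_x$ is a convex combination of $\{\mathbf{P}_{x,y}/\hat{\mathbf{P}}_{x,y} : \mathbf{P}_{x,y} > 0\}$ with non-negative weights $\hat{\mathbf{P}}_{x,y} m_y / b_x$, so $\mu_x \in [r_x^{\textup{L}}, r_x^{\textup{U}}]$ for every feasible $\mathbf{m}$; this is exactly where Assumption~\ref{assump:supportad}\ref{assump:joint} is used, guaranteeing $\hat{\mathbf{P}}_{x,y} > 0$ whenever $\mathbf{P}_{x,y} > 0$ so that $r_x^{\textup{L}}, r_x^{\textup{U}}$ are well-defined.

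Third, using $\mathbf{P}_Y = \sum_x \mathbf{P}_{x,*}$, $\hat{\mathbf{P}}_Y = \sum_x \hat{\mathbf{P}}_{x,*}$, and $A = \sum_x \mu_x b_x$, I would regroup $H$ as $\sum_x H_x^{\star}$, where each $H_x^{\star}$ depends only on $\mathbf{P}_{x,*}$, $\hat{\mathbf{P}}_{x,*}$, $\hat{\mathbf{P}}_Y$, $\mu_x$, $b_x$, and $B$. For an arbitrary direction $\mathbf{v}$, I would then express $\mathbf{v}^\top H \mathbf{v}$ as a sum of per-$x$ scalar expressions in $(\mathbf{P}_{x,*}^\top \mathbf{v}, \hat{\mathbf{P}}_{x,*}^\top \mathbf{v}, \hat{\mathbf{P}}_Y^\top \mathbf{v})$, apply the Cauchy--Schwarz inequality to absorb cross terms, and reduce the remaining expression to a quadratic in $\mu_x$ with leading coefficient $-\|\hat{\mathbf{P}}_{x,*}\|^2 \|\hat{\mathbf{P}}_Y\|^2 \le 0$. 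Since a concave quadratic is non-negative on an interval if and only if it is non-negative at the two endpoints, checking $\mu_x \in \{r_x^{\textup{L}}, r_x^{\textup{U}}\}$ yields precisely the two inequalities in \eqref{eq:convexitycondition}.

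The main obstacle is the third step: finding a decomposition of $H$ for which the PSD condition reduces cleanly to a per-$x$ concave quadratic inequality in $\mu_x$. The difficulty is that the term $-A \log B$ couples every $x$ through $\mathbf{P}_Y$ and $\hat{\mathbf{P}}_Y$, and naive splittings such as $f(\mathbf{m}) = \sum_x a_x \log(b_x/B)$ do not produce convex summands, so the cross terms must be absorbed carefully into Cauchy--Schwarz-type squares. As a sanity check on \eqref{eq:convexitycondition}, when $\hat{p}_{X,Y} = p_{X,Y}$ we have $r_x^{\textup{L}} = r_x^{\textup{U}} = 1$, $\hat{\mathbf{P}}_{x,*} = \mathbf{P}_{x,*}$, and $\hat{\mathbf{P}}_Y = \mathbf{P}_Y$, so \eqref{eq:convexitycondition} reduces to $\|\mathbf{P}_{x,*}\|^2 \|\mathbf{P}_Y\|^2 \geq (\mathbf{P}_{x,*}^\top \mathbf{P}_Y)^2$, the standard Cauchy--Schwarz inequality, recovering the known convexity of the original problem~\eqref{eq:utilityprivacy}.
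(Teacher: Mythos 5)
Your overall architecture---column-wise separation of $\hat L$, an explicit Hessian computation, confinement of the ratio $a_x/b_x$ to $[r_x^{\textup{L}},r_x^{\textup{U}}]$ via a convex-combination argument (the paper packages this as Lemma~\ref{lemma:maxratio}), and an endpoint check of a concave quadratic whose leading coefficient is $-\|\hat{\mathbf{P}}_{x,*}\|^2\|\hat{\mathbf{P}}_Y\|^2$---is the same as the paper's, and your closing sanity check recovering Cauchy--Schwarz when $\hat p_{X,Y}=p_{X,Y}$ matches the paper's remark. But the step you yourself label ``the main obstacle'' is the entire content of the theorem, and you never carry it out: no decomposition of $H$ is exhibited, and the claim that Cauchy--Schwarz ``absorbs the cross terms'' into per-$x$ quadratics that land exactly on \eqref{eq:convexitycondition} is asserted, not proved. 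Worse, the plan is internally inconsistent at precisely that point. Regrouping $H=\sum_x H_x^{\star}$ in the natural way---distributing $\mathbf{P}_Y=\sum_x\mathbf{P}_{x,*}$, $\hat{\mathbf{P}}_Y=\sum_x\hat{\mathbf{P}}_{x,*}$, and $A=\sum_x\mu_x b_x$ over $x$---makes each $H_x^{\star}$ exactly the Hessian of the summand $a_x\log(b_x/B)$, i.e., of the ``naive splitting'' you declare non-convex. You therefore simultaneously claim that the per-$x$ decomposition fails and that a per-$x$ decomposition will succeed after an unspecified regrouping; as written this is a genuine gap, not a routine detail.

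For comparison, the paper commits to the splitting you dismiss: it fixes a pair $(x,z)$, isolates $h_1(\mathbf{t})=(\mathbf{P}_{x,*}^\top\mathbf{t})\log\bigl(\hat{\mathbf{P}}_{x,*}^\top\mathbf{t}/\hat{\mathbf{P}}_Y^\top\mathbf{t}\bigr)$ as in \eqref{eq:term}, writes $\nabla^2h_1$ in terms of $\mathbf{u}=(\mathbf{c}^\top\mathbf{t})\mathbf{b}-(\mathbf{b}^\top\mathbf{t})\mathbf{c}$, observes that the quadratic form vanishes on $\mathbf{u}^\perp$, and evaluates it along $\mathbf{u}$; the resulting scalar inequality \eqref{eq:proofconvexity} is the concave quadratic in $\mathbf{a}^\top\mathbf{t}/\mathbf{b}^\top\mathbf{t}\in[r_x^{\textup{L}},r_x^{\textup{U}}]$ whose endpoint evaluations are the two lines of \eqref{eq:convexitycondition}. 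Be aware, though, that $\nabla^2h_1$ has the rank-two form $\mathbf{u}\mathbf{q}^\top+\mathbf{q}\mathbf{u}^\top$, so checking only $\mathbf{v}=\mathbf{u}$ and $\mathbf{v}\perp\mathbf{u}$ leaves the mixed terms $\mathbf{w}^\top\nabla^2h_1\,\mathbf{u}$ with $\mathbf{w}\perp\mathbf{u}$ unexamined; this is exactly the delicate point at which your proposal stalls, and any completed argument---whether per-summand as in the paper or via your aggregated $\sum_x H_x^{\star}$---must handle these cross terms explicitly rather than by appeal to an unnamed Cauchy--Schwarz step. Until that reduction is actually written down, the proposal does not establish Theorem~\ref{thm:convexity}.
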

\begin{proof}
We postpone the proof to Appendix~\ref{appendix:convexity}.
\end{proof}

\begin{remark}[Convexity of \eqref{eq:obj}-\eqref{eq:constraintconv}]
When $\hat{p}_{X,Y}=p_{X,Y}$, we have $r_{x}^{\textup{L}}=r_{x}^{\textup{U}}=1$ for all $x\in\mathcal{X}$ in Theorem~\ref{thm:convexity}, and \eqref{eq:convexitycondition} follows automatically from the Cauchy–Schwarz inequality. Therefore, problem~\eqref{eq:obj}-\eqref{eq:constraintconv} is convex.
\end{remark}

\section{A limited adversary with unknown correlations}\label{sec:expected}
The results in Section~\ref{sec:known} apply to the case when the adversary's information $\hat{p}_{X,Y}$ is known to the user. For example, it may be known what public data is available to the adversary to build the correlation $\hat{p}_{X,Y}$. In this section, we treat the case when $\hat{p}_{X,Y}$ used by the adversary is not known. Instead, we assume a distribution for $\hat{p}_{X,Y}$ and maximize the corresponding average posterior costs. We can interpret this approach as finding a mapping that protects the private information against a family of adversaries, and each has a different level of information. Although we can apply the methods developed in this section to any distribution for $\hat{p}_{X,Y}$ over the probability simplex, we assume, as a concrete example, that $\hat{p}_{X,Y}$ follows a Dirichlet distribution \cite[Chap. 49]{SK-NB-NLJ:00}.

\subsection{Preliminaries of Dirichlet distribution}
Since $\hat{p}_{X,Y}$ is a probability distribution, one choice for the distribution of $\hat{p}_{X,Y}$ is the Dirichlet distribution. In this subsection, we review some basics of the Dirichlet distribution. A Dirichlet distribution $\mathcal{D}_N(\bm{\alpha})$ in dimension $N$ with parameters $\bm{\alpha}=\begin{bmatrix}
\alpha_1&\dots&\alpha_N
\end{bmatrix}^\top$ is a continuous probability distribution over the probability simplex $\Delta_N$. The probability density function of a Dirichlet distribution $\mathcal{D}_N(\bm{\alpha})$ is given by
\begin{equation*}
p_{\mathcal{D}}(x_1,\dots,x_N;\alpha_1,\dots,\alpha_N) =\frac{1}{\mathcal{B}(\bm{\alpha})}\Pi_{i=1}^Nx_i^{\alpha_i-1},
\end{equation*}
where
\begin{equation*}
\mathcal{B}(\bm{\alpha})=\frac{\Pi_{i=1}^N\Gamma(\alpha_i)}{\Gamma(\alpha_0)},
\end{equation*}
$\alpha_0=\sum_{i=1}^N\alpha_i$, and $\Gamma(\cdot)$ is the gamma function. For a Dirichlet random variable $X\sim\mathcal{D}_N(\bm{\alpha})$, the expectations $\mathbb{E}[X]$ and $\mathbb{E}[\log X]$ and the variance $\textup{Var}[X]$ are given by
\begin{align}\label{eq:dirichlet}
	\begin{split}
&\mathbb{E}[X]=\frac{1}{\alpha_0}\begin{bmatrix}
	\alpha_1&\cdots&\alpha_N\end{bmatrix}^\top,\\
&\mathbb{E}[\log X] = \begin{bmatrix}
	\psi(\alpha_1)&\cdots&\psi(\alpha_N)\end{bmatrix}^\top-\psi(\alpha_0)\mathbf{1}_N,\\
&\textup{Var}[X]=\frac{1}{\alpha_0+1}\begin{bmatrix}
	\frac{\alpha_1}{\alpha_0}(1-\frac{\alpha_1}{\alpha_0})&\cdots&\frac{\alpha_N}{\alpha_0}(1-\frac{\alpha_N}{\alpha_0})
\end{bmatrix}^\top,
	\end{split}
\end{align}
where $\psi(x)=\frac{d\log\Gamma(x)}{dx}$ is the digamma function. Dirichlet distributions are closed under aggregation, i.e., if $X\sim\mathcal{D}_N(\bm{\alpha})$, then we have
\begin{equation*}
(X_1+X_2,X_3,\cdots,X_N)\sim\mathcal{D}_N(\alpha_1+\alpha_2,\alpha_3,\cdots,\alpha_N).
\end{equation*}

\subsection{Expected posterior costs and problem of interest}
When we translate the limited adversary with the prior information $\hat{p}_{X,Y}$ following a probability distribution to a family of adversaries with different $\hat{p}_{X,Y}$'s, a more appropriate performance metric is the posterior cost $\hat{c}_Z$ in~\eqref{eq:costimperfect_z} as discussed in Example~\ref{exam:infoleak}. We expand $\hat{c}_Z$ as follows,
\begin{align}\label{eq:postcost}
	\begin{split}
	\hat{c}_Z&=H(X|Z)+\mathcal{KL}(p_{X|Z}||\hat{p}_{X|Z})\\
&=-\sum_{x,z}p_{X,Z}(x,z)\log \hat{p}_{X|Z}(x|z)\\
&=\sum_{x,z}p_{X,Z}(x,z)\log \frac{\hat{p}_{Z}(z)}{\hat{p}_{X,Z}(x,z)}.
	\end{split}
\end{align}
Suppose $\hat{p}_{X,Y}$ follows a Dirichlet distribution $\mathcal{D}_{|\mathcal{X}|\times|\mathcal{Y}|}$ with parameters $\{\alpha_{x,y}\}_{x\in\mathcal{X},y\in\mathcal{Y}}$, then the posterior cost in~\eqref{eq:postcost} becomes a random variable and we design a mapping $p_{Z|Y}$ to maximize the expectation of the posterior cost. 
Following similar arguments in Lemma~\ref{lemma:wellposedness} and Lemma~\ref{lemma:infinitecosts}, we impose the following assumption on the parameters $\{\alpha_{x,y}\}_{x\in\mathcal{X},y\in\mathcal{Y}}$ of the Dirichlet distribution in order for the problem of interest to be well-defined.
\begin{assumption}[Parameters of the Dirichlet distribution]\label{assump:dirichletpara}
Given a joint distribution $p_{X,Y}$ that satisfies Assumption~\ref{assump:corpri}, the parameters $\{\alpha_{x,y}\}_{x\in\mathcal{X},y\in\mathcal{Y}}$ of the Dirichlet distribution $\mathcal{D}_{|\mathcal{X}|\times|\mathcal{Y}|}$ describing $\hat{p}_{X,Y}$ satisfy that for any $x\in\mathcal{X}$ and $y\in\mathcal{Y}$, if $p_{X,Y}(x,y)>0$, then $\alpha_{x,y}>0$.  
\end{assumption}

We study the following problem in the rest of this section.

\begin{problem}[Privacy-utility trade-offs against a limited adversary with unknown correlations]\label{prob:privacyutilitymultiple}
Given a joint distribution $p_{X,Y}$ that satisfies Assumption~\ref{assump:corpri} and a Dirichlet distribution $\mathcal{D}_{|\mathcal{X}|\times|\mathcal{Y}|}$ with parameters $\{\alpha_{x,y}\}_{x\in\mathcal{X},y\in\mathcal{Y}}$ satisfying Assumption~\ref{assump:dirichletpara}, find a privacy mapping $p_{Z|Y}$ such that the expected posterior cost $\mathbb{E}[\hat{c}_Z]$ is maximized under a utility loss constraint, i.e., solve the following optimization problem
	\begin{subequations}\nonumber
		\begin{align}
			\maximize_{p_{Z|Y}} \quad & \mathbb{E}[\hat{c}_Z]\\
			\st \quad & \E_{Y,Z}[d(y,z)] \leq\delta,\\
			&p_{Z|Y}(\cdot|y)\in\Delta_{|\mathcal{Z}|},\quad\forall y\in\mathcal{Y}.
		\end{align}
	\end{subequations}
\end{problem}

Since an analytic expression for the objective function in Problem~\ref{prob:privacyutilitymultiple} is not available, we propose to approximate the objective function by its lower bound.

\subsection{Lower bounding the expected posterior cost}
In this subsection, we provide a lower bound for $\mathbb{E}[\hat{c}_Z]$ and propose to maximize the lower bound instead of the objective function in Problem~\ref{prob:privacyutilitymultiple} directly.

\begin{thom}[Lower bound for expected posterior cost]
Given a joint distribution $p_{X,Y}$ that satisfies Assumption~\ref{assump:corpri} and a Dirichlet distribution $\mathcal{D}_{|\mathcal{X}|\times|\mathcal{Y}|}$ with parameters $\{\alpha_{x,y}\}_{x\in\mathcal{X},y\in\mathcal{Y}}$ satisfying Assumption~\ref{assump:dirichletpara}, then the following inequality holds
\begin{multline}\label{eq:lowerbound}
	\mathbb{E}[\hat{c}_Z]\geq\sum_{x,z}p_{X,Z}(x,z)\big(\log(\sum_{y\in\mathcal{Y}}e^{\psi(\alpha_y)-\psi(\alpha_0)}p_{Z|Y}(z|y))\\
	-\log(\sum_{y\in\mathcal{Y}}\frac{\alpha_{x,y}}{\alpha_0}p_{Z|Y}(z|y))\big),
\end{multline}
where $\alpha_y=\sum_{x\in\mathcal{X}}\alpha_{x,y}$ and $\alpha_0=\sum_{y\in\mathcal{Y}}\alpha_{y}$.
\end{thom}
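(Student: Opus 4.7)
The plan is to start from the expansion
$$\hat{c}_Z = \sum_{x,z} p_{X,Z}(x,z)\bigl(\log \hat{p}_Z(z) - \log \hat{p}_{X,Z}(x,z)\bigr)$$
given in \eqref{eq:postcost}, take the expectation with respect to $\hat{p}_{X,Y} \sim \mathcal{D}_{|\mathcal{X}|\times|\mathcal{Y}|}$, and bound the two resulting terms in \emph{opposite} directions using Jensen's inequality. Since $p_{X,Z}$ depends only on the deterministic true $p_{X,Y}$ and the privacy mapping, it can be pulled outside the expectation, reducing the task to lower bounding $\mathbb{E}[\log \hat{p}_Z(z)]$ and upper bounding $\mathbb{E}[\log \hat{p}_{X,Z}(x,z)]$.

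For the upper bound, I would write $\hat{p}_{X,Z}(x,z) = \sum_y \hat{p}_{X,Y}(x,y)\,p_{Z|Y}(z|y)$, which is a nonnegative linear combination of Dirichlet entries. Since $\log$ is concave, Jensen yields
$$\mathbb{E}[\log \hat{p}_{X,Z}(x,z)] \leq \log \sum_{y\in\mathcal{Y}} \mathbb{E}[\hat{p}_{X,Y}(x,y)]\,p_{Z|Y}(z|y) = \log \sum_{y\in\mathcal{Y}} \frac{\alpha_{x,y}}{\alpha_0}\,p_{Z|Y}(z|y),$$
where the closed form $\mathbb{E}[\hat{p}_{X,Y}(x,y)] = \alpha_{x,y}/\alpha_0$ is read off from \eqref{eq:dirichlet}. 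Assumption~\ref{assump:dirichletpara} guarantees that the argument of the log is strictly positive whenever $p_{X,Z}(x,z) > 0$, so the bound is meaningful.

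For the lower bound, the key structural observation is that $\hat{p}_Y(y) = \sum_x \hat{p}_{X,Y}(x,y)$, so the aggregation property of the Dirichlet distribution gives $\hat{p}_Y \sim \mathcal{D}_{|\mathcal{Y}|}(\alpha_1,\dots,\alpha_{|\mathcal{Y}|})$ with $\alpha_y = \sum_x \alpha_{x,y}$, and \eqref{eq:dirichlet} yields $\mathbb{E}[\log \hat{p}_Y(y)] = \psi(\alpha_y) - \psi(\alpha_0)$. I would then rewrite $\hat{p}_Z(z) = \sum_y p_{Z|Y}(z|y)\,\exp(\log \hat{p}_Y(y))$ and invoke the fact that the log-sum-exp map $\bm{t} \mapsto \log \sum_y w_y e^{t_y}$ is \emph{convex} in $\bm{t}$ for any fixed nonnegative weights $w_y$. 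Applying Jensen's inequality in the reverse direction to this convex function gives
$$\mathbb{E}[\log \hat{p}_Z(z)] \geq \log \sum_{y\in\mathcal{Y}} p_{Z|Y}(z|y)\,e^{\psi(\alpha_y) - \psi(\alpha_0)}.$$

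Combining the two inequalities with the signs prescribed in \eqref{eq:postcost} yields \eqref{eq:lowerbound}. The main conceptual hurdle is recognizing that the two log terms must be treated with \emph{opposite} flavors of Jensen: the second term is a log of an expectation-friendly linear combination (concave), while the first term, after rewriting $\hat{p}_Y(y)$ as $e^{\log \hat{p}_Y(y)}$, becomes a log-sum-exp in the log-variables (convex). The Dirichlet aggregation property is what makes the substitution $\mathbb{E}[\log \hat{p}_Y(y)] = \psi(\alpha_y) - \psi(\alpha_0)$ available in closed form and drives the appearance of $\psi(\alpha_y) - \psi(\alpha_0)$ in the final bound.
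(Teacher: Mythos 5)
Your proposal is correct and follows essentially the same route as the paper's proof: split $\mathbb{E}[\hat{c}_Z]$ term by term, apply Jensen's inequality with the concavity of $\log$ to upper bound $\mathbb{E}[\log\hat{p}_{X,Z}(x,z)]$ via $\mathbb{E}[\hat{p}_{X,Y}(x,y)]=\alpha_{x,y}/\alpha_0$, and apply Jensen's inequality with the convexity of the log-sum-exp function to lower bound $\mathbb{E}[\log\hat{p}_Z(z)]$ via $\mathbb{E}[\log\hat{p}_Y(y)]=\psi(\alpha_y)-\psi(\alpha_0)$. The two applications of Jensen in opposite directions and the use of the Dirichlet moment formulas match the paper's argument exactly.
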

\begin{proof}
Since $p_{X,Z}(x,z)$'s are deterministic in~\eqref{eq:postcost}, we fix a pair of $x\in\mathcal{X}$ and $z\in\mathcal{Z}$ and compute
\begin{align}\label{eq:lowerboundderivation}
	\begin{split}
&\quad\mathbb{E}[\log \frac{\hat{p}_{Z}(z)}{\hat{p}_{X,Z}(x,z)}]\\
&=\mathbb{E}[\log \hat{p}_{Z}(z)]-\mathbb{E}[\log \hat{p}_{X,Z}(x,z)]\\
&\geq\mathbb{E}[\log \hat{p}_{Z}(z)]-\log(\mathbb{E}[ \hat{p}_{X,Z}(x,z)])\\
&=\mathbb{E}[\log \big(\sum_{y\in\mathcal{Y}}e^{\log(\hat{p}_Y(y)p_{Z|Y}(z|y))}]\big)-\log(\mathbb{E}[ \hat{p}_{X,Z}(x,z)])\\
&\geq\log\big(\sum_{y\in\mathcal{Y}}e^{\mathbb{E}[\log(\hat{p}_Y(y)p_{Z|Y}(z|y))]}\big)-\log(\mathbb{E}[ \hat{p}_{X,Z}(x,z)])\\
&=\log\big(\sum_{y\in\mathcal{Y}}e^{\psi(\alpha_y)-\psi(\alpha_0)}p_{Z|Y}(z|y))\big)\\
&\quad-\log(\sum_{y\in\mathcal{Y}}\frac{\alpha_{x,y}}{\alpha_0}p_{Z|Y}(z|y))\\
&=\log\big(\sum_{y\in\mathcal{Y}}e^{\psi(\alpha_y)-\psi(\alpha_0)}p_{Z|Y}(z|y))\big)\\
&\quad-\log(\sum_{y\in\mathcal{Y}}\frac{\alpha_{x,y}}{\alpha_0}p_{Z|Y}(z|y)),
	\end{split}
\end{align}
where the inequalities follow from the Jensen's inequality and the facts that $\log(\cdot)$ is concave and the log-sum-exp function is convex. Then, we obtain \eqref{eq:lowerbound} by summing~\eqref{eq:lowerboundderivation} over $x$ and $z$ with weights $p_{X,Z}(x,z)$.
\end{proof}

\begin{remark}[Consistency with the case of known adversary's prior]
We note that, when the variance of the Dirichlet distribution $\mathcal{D}_{|\mathcal{X}|\times\mathcal{Y}}$ goes to zero and the expectation stays unchanged, i.e., $\alpha_0\rightarrow\infty$ and $\frac{\alpha_{x,y}}{\alpha_0}$ remains constant, the bound in~\eqref{eq:lowerbound} recovers the posterior cost~\eqref{eq:postcost} as if there were a limited adversary with the prior $\hat{p}_{X,Y}(x,y)=\frac{\alpha_{x,y}}{\alpha_0}$. Specifically, when the variance goes to zeros, we have $\hat{p}_Y(y)=\frac{\alpha_y}{\alpha_0}$ almost surely and $\mathbb{E}[\log(\hat{p}_Y(y))]=\log\frac{\alpha_y}{\alpha_0}$, and the fourth line of \eqref{eq:lowerboundderivation} becomes the same as~\eqref{eq:postcost}. 
\end{remark}

\begin{remark}[Bounds for other distributions]
Other than the fact that $\mathbb{E}[\log X]$ has a closed-form expression for a Dirichlet random variable $X$, we did not use any other properties of the Dirichlet distribution in deriving the lower bound~\eqref{eq:lowerbound} (the aggregation property of the Dirichlet distribution is not necessary for the derivation). Therefore, \eqref{eq:lowerbound} is applicable to any other distributions for $\hat{p}_{X,Y}$.
\end{remark}

We denote the lower bound for $\mathbb{E}[\hat{c}_Z]$ in~\eqref{eq:lowerbound} by $\underline{\hat{c}}_Z$ and turn to solve the following optimization problem.

\begin{problem}[Approximate privacy-utility trade-offs against a limited adversary with unknown correlations]\label{prob:privacyutilitymultiplelower}
	Given a joint distribution $p_{X,Y}$ that satisfies Assumption~\ref{assump:corpri} and a Dirichlet distribution $\mathcal{D}_{|\mathcal{X}|\times|\mathcal{Y}|}$ with parameters $\{\alpha_{x,y}\}_{x\in\mathcal{X},y\in\mathcal{Y}}$ satisfying Assumption~\ref{assump:dirichletpara}, find a privacy mapping $p_{Z|Y}$ such that the lower bound $\underline{\hat{c}}_Z$ for $\mathbb{E}[\hat{c}_Z]$ given in~\eqref{eq:lowerbound} is maximized under a utility loss constraint, i.e., solve the following optimization problem
	\begin{subequations}\nonumber
		\begin{align}
			\maximize_{p_{Z|Y}} \quad & \underline{\hat{c}}_Z\\
			\st \quad & \E_{Y,Z}[d(y,z)] \leq\delta,\\
			&p_{Z|Y}(\cdot|y)\in\Delta_{|\mathcal{Z}|},\quad\forall y\in\mathcal{Y}.
		\end{align}
	\end{subequations}
\end{problem}

By adding and subtracting terms in $\underline{\hat{c}}_Z$, we can again reorganize the objective function in Problem~\ref{prob:privacyutilitymultiplelower} into the form of a DC. Specifically,
\begin{align}\label{eq:lowerboundconvex}
\begin{split}
&\quad\underline{\hat{c}}_Z\\
&=	\sum_{x,z}p_{X,Z}(x,z)\big(\log(\frac{\sum_{y\in\mathcal{Y}}e^{\psi(\alpha_y)-\psi(\alpha_0)}p_{Z|Y}(z|y)}{p_{X,Z}(x,z)})\\
&\quad+\log\frac{p_{X,Z}(x,z)}{\sum_{y\in\mathcal{Y}}\frac{\alpha_{x,y}}{\alpha_0}p_{Z|Y}(z|y)}\big)\\
&=	-\underbrace{\sum_{x,z}p_{X,Z}(x,z)\log(\frac{p_{X,Z}(x,z)}{\sum_{y\in\mathcal{Y}}e^{\psi(\alpha_y)-\psi(\alpha_0)}p_{Z|Y}(z|y)})}_{\textup{first term}}\\
&\quad+\underbrace{\sum_{x,z}p_{X,Z}(x,z)\log\frac{p_{X,Z}(x,z)}{\sum_{y\in\mathcal{Y}}\frac{\alpha_{x,y}}{\alpha_0}p_{Z|Y}(z|y)}}_{\textup{second term}},
\end{split}
\end{align}
where the convexity of both terms in~\eqref{eq:lowerboundconvex} as functions of $p_{Z|Y}$ follows from the facts that  $t\log t$ is convex and the perspective function of a convex function is convex \cite[Chap. 3.2.6]{SB-LV:04}. Note that the lower bound $\underline{\hat{c}}_Z$ of $\mathbb{E}[\hat{c}_Z]$ in~\eqref{eq:lowerboundconvex} has a very similar form as the posterior cost $\hat{c}_Z$ in~\eqref{eq:postcost}.  Finally, we can solve Problem~\ref{prob:privacyutilitymultiplelower} by Algorithm~\ref{alg:DC} after rewriting the objective function as minimizing  $-\underline{\hat{c}}_Z$.

\section{Numerical examples}\label{sec:numerics}
This section provides a numerical example using a census data set \cite{DD-CG:2017,SS-AZ-FduPC-SB-NF-BK-PO-NT:15} to illustrate the presented results. 

\subsection{Simulation setup}
The data set contains personal information of $48842$ individuals, and each individual has $14$ recorded attributes, e.g., age, gender, education, income level, race, marriage status. In our example, we take the ``income level'' as the private information $X$ with support  $\mathcal{X}=\{\texttt{high},\texttt{low}\}$, where \texttt{high} and \texttt{low} correspond to income levels ``$\geq50K$'' and ``$<50K$'', respectively. We select the tuple of attributes $(``\textup{age}",``\textup{gender}",``\textup{education}")$ to be the useful information $Y$, and each attribute has the following possibilities
\begin{enumerate}
\item $\text{``age''}\in\{\texttt{young},\texttt{middle-aged},\texttt{senior}\}$ where we assign \texttt{young}, \texttt{middle-aged} and \texttt{senior} to people whose age falls in the ranges $[0,30]$, $(30,60]$, and $(60,100]$, respectively;
\item $\text{``gender''}\in\{\texttt{male},\texttt{female}\}$;
\item $\text{``education''}\in\{\texttt{others},\texttt{college},\texttt{graduate}\}$ where \texttt{others} is for people who are high-school graduates or under, \texttt{college} is for people having bachelors or equivalent degrees (e.g., professional school, some college experience), and \texttt{graduate} is for people  having graduate degrees (masters or doctorates).
\end{enumerate}
The support $\mathcal{Y}$ of the public information $Y$ consists of $18$ possible combinations of the attribute tuple $(``\textup{age}",``\textup{gender}",``\textup{education}")$. Let $n_{x,y}$ be the number of people with attributes $x\in\mathcal{X}$ and $y\in\mathcal{Y}$, then we set the joint probability $p_{X,Y}(x,y)$ to be $\frac{n_{x,y}}{48842}$. For a given $y\in\mathcal{Y}$, we generate the released information $z$ by erasing zero, one or two  dimensions of $y$, and the distortion $d(y,z)$ is the number of erased features. As a concrete example, let $y=(\texttt{young},\texttt{male},\texttt{college})$, then the released information $z$ has the following possibilities: (\texttt{young}, \texttt{male}, \texttt{college}), ($-$, \texttt{male}, \texttt{college}), (\texttt{young}, $-$, \texttt{college}), (\texttt{young}, \texttt{male}, $-$),(\texttt{young}, $-$, $-$), ($-$, \texttt{male}, $-$), ($-$, $-$, \texttt{college}), where $-$ represents an erasure. The support $\mathcal{Z}$ of $Z$ is of size $47$ by construction.

We implement Algorithm~\ref{alg:DC} to solve Problem~\ref{prob:privacyutilityad} and Problem~\ref{prob:privacyutilitymultiplelower} in the simulation, and the parameters $\epsilon$ and \texttt{MaxIter} are $10^{-6}$ and $100$, respectively. We run the algorithm from $10$ randomly sampled initial conditions and adopt the best found solution. We solve the inner convex problem in line~\ref{line:convexp} of Algorithm~\ref{alg:DC} via CVX in MATLAB \cite{cvx, MG-SB:08}. 

\subsection{Limited adversaries with biased correlations}
In this subsection, we solve Problem~\ref{prob:privacyutilityad} for a few randomly generated adversaries and show the improved privacy-utility trade-offs. We obtain the joint distributions $\hat{p}_{X,Y}$'s by perturbing the elements of the joint distribution $p_{X,Y}$ by a certain percentage, followed by normalization. Specifically, given a percentage level $\gamma\in(0,1)$,  each element of $p_{X,Y}$ is first multiplied by a uniform random variable over $[1-\gamma,1+\gamma]$, then $\hat{p}_{X,Y}$ is constructed by normalizing the perturbed $p_{X,Y}$. In our simulation, we choose the percentage levels to be $\gamma\in\{10\%,25\%,50\%\}$ and show the distances between $p_{X,Y}$ and the generated $\hat{p}_{X,Y}$'s measured by the Frobenius norm in the legend of Fig.~\ref{fig:single}.

\begin{figure}[http]
	\centering
	\subfigure[Information leakge]{\includegraphics[scale=0.45]{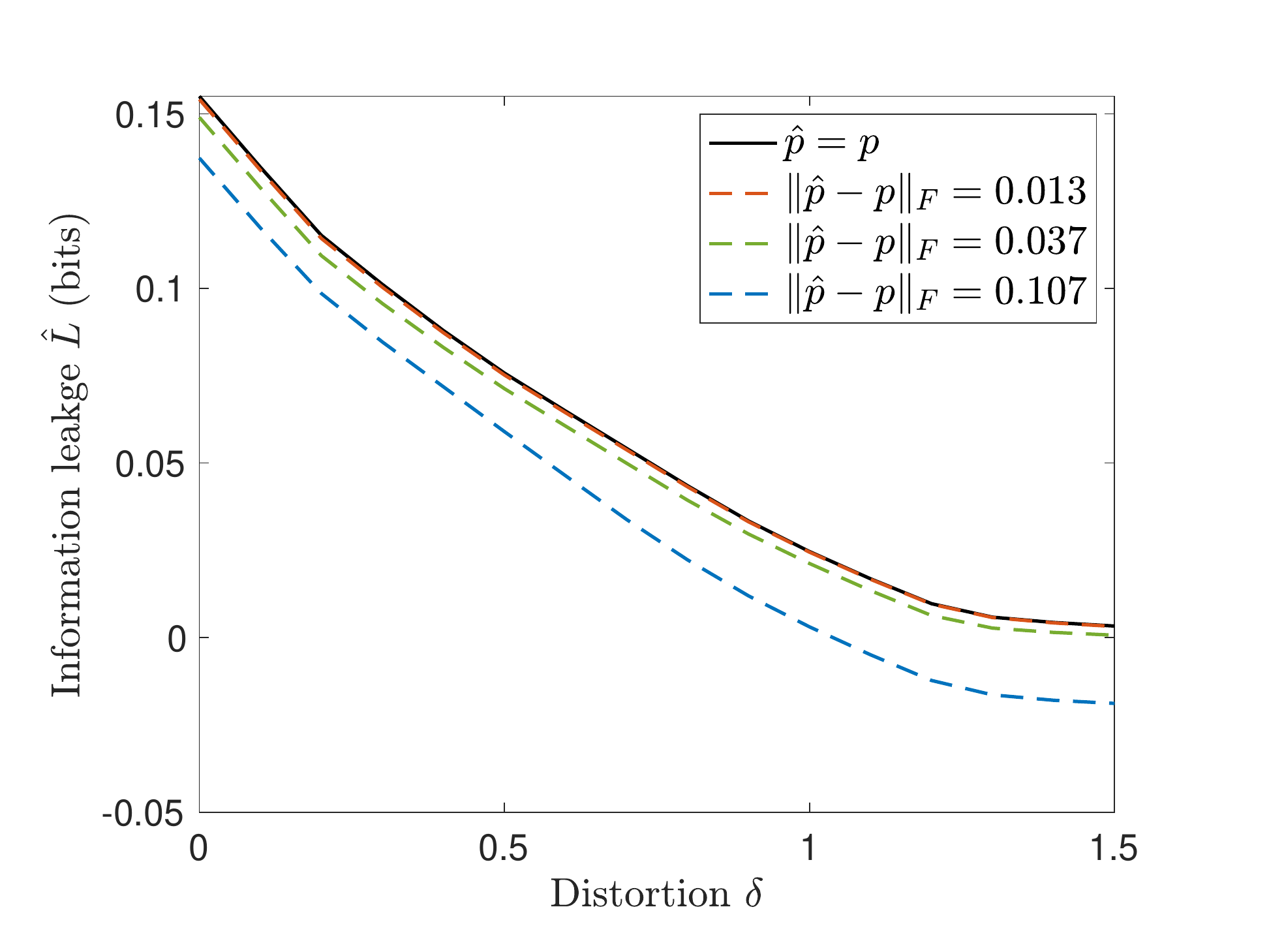}\label{fig:infoleak_single}}
	\subfigure[Posterior costs]{\includegraphics[scale=0.45]{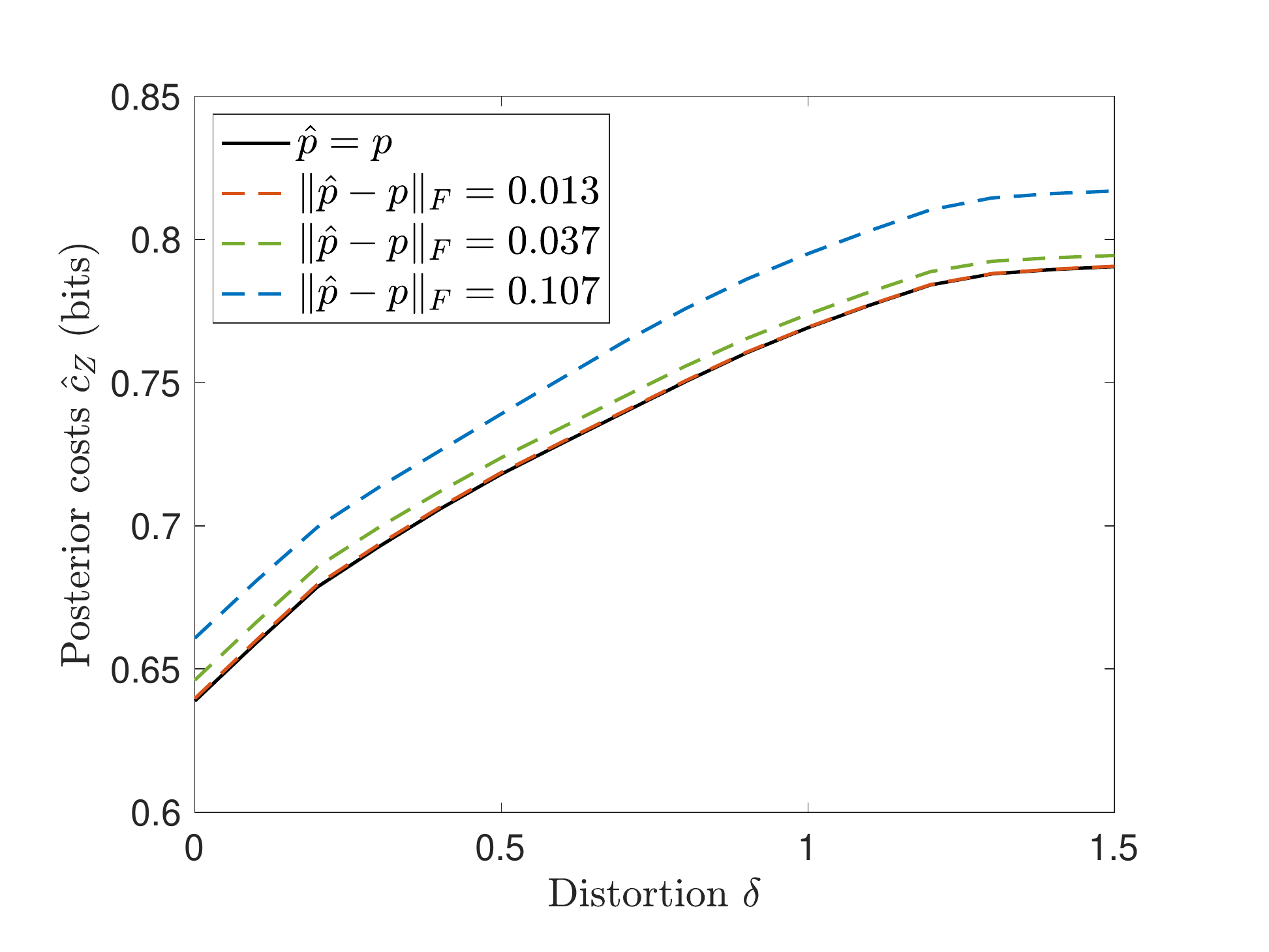}\label{fig:postcost_single}}
	\caption{Information leakage and posterior costs against limited adversaries.}\label{fig:single}
\end{figure}

In Fig.~\ref{fig:infoleak_single} and~\ref{fig:postcost_single}, the x-axis is the distortion level ranging from $0$ to $1.5$ with an increment of $0.1$, and the y-axis shows the information leakage and the posterior costs (evaluated in bits with the logarithm to the base $2$), respectively. We observe that the difference between $\hat{p}_{X,Y}$ and $p_{X,Y}$ leads to lower information leakage for the user and correspondingly higher costs for the adversaries. Under the same distortion level, a larger difference leads to worse inference performance for the adversaries quantified by the posterior costs. Moreover, when the distance between $\hat{p}_{X,Y}$ and $p_{X,Y}$ is relatively high, the information leakage can be negative as demonstrated by the blue dashed line in Fig.~\ref{fig:infoleak_single} in the high-distortion regime.

\subsection{Limited adversaries with unknown correlations}
In this subsection, we solve Problem~\ref{prob:privacyutilitymultiplelower} for cases when the adversary's information about the correlation is modeled by a Dirichlet distribution $\mathcal{D}_{|\mathcal{X}|\times|\mathcal{Y}|}(\bm{\alpha})$. Let a scale parameter be $\nu\in\{0.008,0.01,0.05\}$. Then for $x\in\mathcal{X}$ and $y\in\mathcal{Y}$, ${\alpha}_{x,y}$ is set to be $\nu n_{x,y}$. Note that a smaller $\nu$ implies a smaller $\alpha_0$ and by~\eqref{eq:dirichlet}, a higher variance. On the other hand, the expectations of these Dirichlet distributions are equal to $p_{X,Y}$. We can interpret the distribution's variance as the user's confidence, i.e., a lower variance corresponds to a conservative user. We evaluate the performance of the solutions to Problem~\ref{prob:privacyutilitymultiplelower} against these three different Dirichlet distributions through empirical averages, i.e., we sample $100$ $\hat{p}_{X,Y}$'s from the respective Dirichlet distribution and calculate the average empirical posterior costs. We also include the performance of the nominal privacy mapping obtained by assuming $\hat{p}_{X,Y}=p_{X,Y}$. Note that since the true $\hat{p}_{X,Y}$ follows different Dirichlet distributions in different cases, the performance of the nominal privacy mapping also differs. The results are reported in Fig.~\ref{fig:expected}. In the legend of the figure, H, M, and L represent the cases of high ($\nu=0.008$), medium ($\nu=0.01$) and low ($\nu=0.05$) variances, respectively. Comparing solid lines with different colors, we observe that higher variance leads to higher posterior costs, which results from the fact that when the variance is high, many $\hat{p}_{X,Y}$'s are away from ${p}_{X,Y}$. The difference between solid and dashed lines with the same color shows that the solution to Problem~\ref{prob:privacyutilitymultiplelower} is superior to the nominal solution in the sense that it causes higher posterior costs for the adversaries. 

\begin{figure}[http]
	\centering
\includegraphics[scale=0.45]{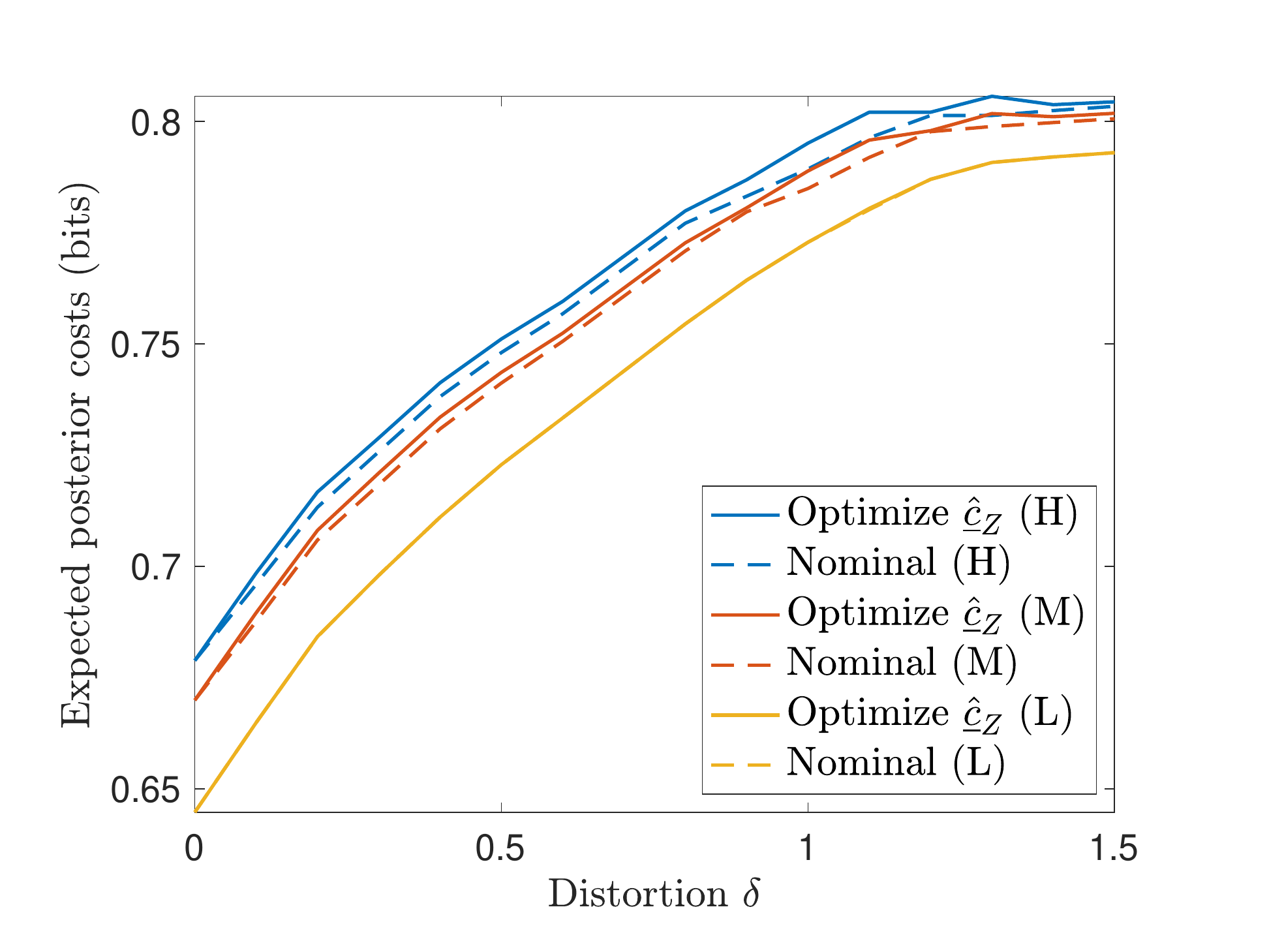}
	\caption{Expected posterior costs against a limited adversary whose information about the correlation is described by a Dirichlet distribution. H, M, L in the figure legend stand for cases of Dirichlet distributions with high, medium and low variances, respectively. The ``nominal'' represents the performance of the solution obtained by assuming $\hat{p}_{X,Y}=p_{X,Y}$.}\label{fig:expected}
\end{figure}

\section{Conclusion}\label{sec:conclusion}
We studied privacy-utility trade-offs against a limited adversary with biased statistical information regarding the underlying correlated private and useful information. We identified minimal assumptions on the adversary's information so that the privacy metrics and the design of the probabilistic privacy mapping are well-defined. We further formulated the design problem as a DC program and solved it via CCCP. When the adversary's information is not precisely available, we adopted a Bayesian view and sought to optimize the average posterior costs for the adversary. 

We exemplified the impact of the information asymmetry between the user and the adversary using mutual information as the underlying privacy metric. For future work, we will further systematically investigate similar impacts on other privacy metrics. On the other hand, it is also interesting to study how one could exploit such impacts to effectively convey information to the intended receivers and hide information from the adversaries, given that these two receivers have different information availability.

\appendices

\section{Proof of Theorem~\ref{thm:convexity}}\label{appendix:convexity}
\begin{proof}
	Note that in Problem~\ref{prob:privacyutilityad}, the constraints~\eqref{eq:constraintad} and \eqref{eq:simplexad} are linear equality and inequality constraints and thus convex. Therefore,  Problem~\ref{prob:privacyutilityad} is convex if the objective function $\hat{L}$ is convex in the variable $\mathbf{M}\in\mathbb{R}^{|\mathcal{Y}\times|\mathcal{Z}|}$ in the domain of the function. We prove the convexity of $\hat{L}$ under condition~\eqref{eq:convexitycondition} by showing that $\hat{L}$ can be represented as a sum of convex functions of vector variables and each convex function has a positive semidefinite Hessian matrix in its domain.
	
	From~\eqref{eq:leakgead}, we have
	\begin{align*}
		\hat{L}&=\sum_{x,z}p_{X,Z}(x,z)\log \frac{\hat{p}_{Z|X}(z|x)}{ \hat{p}_Z(z)}\\
		&=\sum_{x,z}(\mathbf{P}_{x,*}^\top\mathbf{M}_{*,z})\log \frac{\hat{\mathbf{P}}_{x,*}^\top\mathbf{M}_{*,z}}{ (\mathbf{P}_{x,*}^\top\mathbf{1}_{|\mathcal{Y}|})(\hat{\mathbf{P}}_Y^\top\mathbf{M}_{*,z})}.
	\end{align*}
	Fix a pair of indices $x$ and $z$ and focus on the term
	\begin{align}\label{eq:term}
		\begin{split}
			&\quad			h(\mathbf{M}_{*,z})\\
			&=(\mathbf{P}_{x,*}^\top\mathbf{M}_{*,z})\log \frac{\hat{\mathbf{P}}_{x,*}^\top\mathbf{M}_{*,z}}{ (\mathbf{P}_{x,*}^\top\mathbf{1}_{|\mathcal{Y}|})(\hat{\mathbf{P}}_Y^\top\mathbf{M}_{*,z})}\\
			&=\underbrace{(\mathbf{P}_{x,*}^\top\mathbf{M}_{*,z})\log \frac{\hat{\mathbf{P}}_{x,*}^\top\mathbf{M}_{*,z}}{ \hat{\mathbf{P}}_Y^\top\mathbf{M}_{*,z}}}_{\text{first term } h_1(\mathbf{M}_{*,z})}-\underbrace{(\mathbf{P}_{x,*}^\top\mathbf{M}_{*,z})\log \mathbf{P}_{x,*}^\top\mathbf{1}_{|\mathcal{Y}|}}_{\text{second term}}.
		\end{split}
	\end{align}
	Note that the second term in~\eqref{eq:term} is linear in $\mathbf{M}_{*,z}$. Thus, if the first term $h_1(\mathbf{M}_{*,z})$ of $h(\mathbf{M}_{*,z})$ in~\eqref{eq:term} is convex in the variable $\mathbf{M}_{*,z}$, then the objective function $\hat{L}$ is a sum of convex functions and thus is itself convex. 
	
	For ease of exposition, let
	\begin{align*}
		&\mathbf{t}=\mathbf{M}_{*,z}, &&\mathbf{a}=\mathbf{P}_{x,*},\\
		&\mathbf{b}=\hat{\mathbf{P}}_{x,*},&&\mathbf{c}= \hat{\mathbf{P}}_Y,
	\end{align*}
	then, we have
	\begin{equation*}
		h_1(\mathbf{t})=(\mathbf{a}^\top\mathbf{t})\cdot\log\frac{\mathbf{b}^\top\mathbf{t}}{\mathbf{c}^\top\mathbf{t}},
	\end{equation*}
	where $\mathbf{a}^\top\mathbf{t}$, $\mathbf{b}^\top\mathbf{t}$ and $\mathbf{c}^\top\mathbf{t}$ are positive.
	Without loss of generality, we assume that $\mathbf{b}$ and $\mathbf{c}$ are linearly independent. Otherwise, $h_1(\mathbf{t})$ becomes a linear function of $\mathbf{t}$ and is convex. The derivative and Hessian of $h_1(\mathbf{t})$ are given by
	\begin{equation*}
		\nabla h_1(\mathbf{t})=\mathbf{a}\cdot\log(\frac{\mathbf{b}^\top\mathbf{t}}{\mathbf{c}^\top\mathbf{t}})+\mathbf{b}\cdot\frac{\mathbf{a}^\top\mathbf{t}}{\mathbf{b}^\top\mathbf{t}}-\mathbf{c}\cdot\frac{\mathbf{a}^\top\mathbf{t}}{\mathbf{c}^\top\mathbf{t}},
	\end{equation*}
	and
	\begin{align*}
		&\quad	\nabla^2 h_1(\mathbf{t})\\
		&=\frac{\mathbf{a}\mathbf{b}^\top}{\mathbf{b}^\top\mathbf{t}}-\frac{\mathbf{a}\mathbf{c}^\top}{\mathbf{c}^\top\mathbf{t}}+	
		\frac{\mathbf{b}\mathbf{a}^\top}{\mathbf{b}^\top\mathbf{t}}-\frac{\mathbf{a}^\top\mathbf{t}}{(\mathbf{b}^\top\mathbf{t})^2}\mathbf{b}\mathbf{b}^\top
		\\&\quad-	\frac{\mathbf{c}\mathbf{a}^\top}{\mathbf{c}^\top\mathbf{t}}+\frac{\mathbf{a}^\top\mathbf{t}}{(\mathbf{c}^\top\mathbf{t})^2}\mathbf{c}\mathbf{c}^\top\\
		&=\frac{\mathbf{a}\cdot((\mathbf{c}^\top\mathbf{t})\mathbf{b}-(\mathbf{b}^\top\mathbf{t})\mathbf{c})^\top+((\mathbf{c}^\top\mathbf{t})\mathbf{b}-(\mathbf{b}^\top\mathbf{t})\mathbf{c})\cdot\mathbf{a}^\top}{(\mathbf{b}^\top\mathbf{t})(\mathbf{c}^\top\mathbf{t})}\\
		&\quad+\frac{(\mathbf{a}^\top\mathbf{t})\cdot((\mathbf{b}^\top\mathbf{t})^2\mathbf{c}\mathbf{c}^\top-(\mathbf{c}^\top\mathbf{t})^2\mathbf{b}\mathbf{b}^\top)}{(\mathbf{b}^\top\mathbf{t})^2(\mathbf{c}^\top\mathbf{t})^2},\\
		&=\frac{\mathbf{a}\mathbf{u}^\top+\mathbf{u}\mathbf{a}^\top}{(\mathbf{b}^\top\mathbf{t})(\mathbf{c}^\top\mathbf{t})}-\frac{(\mathbf{a}^\top\mathbf{t})\cdot(\mathbf{u}\cdot((\mathbf{c}^\top\mathbf{t})\mathbf{b}+(\mathbf{b}^\top\mathbf{t})\mathbf{c})^\top)}{(\mathbf{b}^\top\mathbf{t})^2(\mathbf{c}^\top\mathbf{t})^2}\\
		&\quad+\frac{(\mathbf{a}^\top\mathbf{t})\cdot(\mathbf{b}\mathbf{c}^\top-\mathbf{c}\mathbf{b}^\top)}{(\mathbf{b}^\top\mathbf{t})(\mathbf{c}^\top\mathbf{t})},
	\end{align*}
	where $\mathbf{u}=(\mathbf{c}^\top\mathbf{t})\mathbf{b}-(\mathbf{b}^\top\mathbf{t})\mathbf{c}$. In order to show that $	\nabla^2h_1(\mathbf{t})$ is positive semidefinite under condition~\eqref{eq:convexitycondition}, we show that for any $ \mathbf{v}\in\mathbb{R}^{|\mathcal{Y}|}$, we have that $\mathbf{v}^\top\nabla^2h_1(\mathbf{t})\mathbf{v}\geq0$. We consider two cases.
	\begin{enumerate}
		\item $\mathbf{v}\perp \mathbf{u}$, in this case we have
		\begin{align}
			\mathbf{v}^\top\nabla_{\mathbf{t}}^2f(\mathbf{t})\mathbf{v}=0.
		\end{align}
		\item $\mathbf{v}=\mathbf{u}$, in this case we have
		\begin{align*}
			&\quad(\mathbf{b}^\top\mathbf{t})(\mathbf{c}^\top\mathbf{t})^2\frac{\mathbf{v}^\top\nabla^2h_1(\mathbf{t})\mathbf{v}}{\mathbf{v}^\top\mathbf{v}}\\
			&=2(\mathbf{c}^\top\mathbf{t})\cdot\mathbf{a}^\top\mathbf{v}-\frac{\mathbf{a}^\top\mathbf{t}}{\mathbf{b}^\top\mathbf{t}}((\mathbf{c}^\top\mathbf{t})\mathbf{b}+(\mathbf{b}^\top\mathbf{t})\mathbf{c})^\top\mathbf{v}\\
			&=2(\mathbf{c}^\top\mathbf{t})^2\mathbf{a}^\top\mathbf{b}-
			2(\mathbf{c}^\top\mathbf{t})(\mathbf{b}^\top\mathbf{t})\mathbf{a}^\top\mathbf{c}
			\\
			&\quad-\frac{\mathbf{a}^\top\mathbf{t}}{\mathbf{b}^\top\mathbf{t}}(\mathbf{c}^\top\mathbf{t})^2\mathbf{b}^\top\mathbf{b}+			
			\frac{\mathbf{a}^\top\mathbf{t}}{\mathbf{b}^\top\mathbf{t}}(\mathbf{b}^\top\mathbf{t})^2\mathbf{c}^\top\mathbf{c}\\
			&=(2\mathbf{a}^\top\mathbf{b}-\frac{\mathbf{a}^\top\mathbf{t}}{\mathbf{b}^\top\mathbf{t}}\mathbf{b}^\top\mathbf{b})(\mathbf{c}^\top\mathbf{t})^2-
			2(\mathbf{c}^\top\mathbf{t})(\mathbf{b}^\top\mathbf{t})\mathbf{a}^\top\mathbf{c}
			\\
			&\quad+			
			\frac{\mathbf{a}^\top\mathbf{t}}{\mathbf{b}^\top\mathbf{t}}\mathbf{c}^\top\mathbf{c}(\mathbf{b}^\top\mathbf{t})^2\\
			&=\begin{bmatrix}
				\mathbf{a}^\top\mathbf{t}&\mathbf{b}^\top\mathbf{t}
			\end{bmatrix}\begin{bmatrix}
				2\mathbf{a}^\top\mathbf{b}-\frac{\mathbf{a}^\top\mathbf{t}}{\mathbf{b}^\top\mathbf{t}}\mathbf{b}^\top\mathbf{b}&-\mathbf{a}^\top\mathbf{c}\\
				-\mathbf{a}^\top\mathbf{c}&\frac{\mathbf{a}^\top\mathbf{t}}{\mathbf{b}^\top\mathbf{t}}\mathbf{c}^\top\mathbf{c}
			\end{bmatrix}\begin{bmatrix}
				\mathbf{a}^\top\mathbf{t}\\\mathbf{b}^\top\mathbf{t}
			\end{bmatrix},
		\end{align*}
		which is nonnegative if
		\begin{equation}\label{eq:proofconvexity}
			(2\mathbf{a}^\top\mathbf{b}-\frac{\mathbf{a}^\top\mathbf{t}}{\mathbf{b}^\top\mathbf{t}}\mathbf{b}^\top\mathbf{b})(\frac{\mathbf{a}^\top\mathbf{t}}{\mathbf{b}^\top\mathbf{t}}\mathbf{c}^\top\mathbf{c})\geq(\mathbf{a}^\top\mathbf{c})^2,\quad\forall \mathbf{t}\in\mathbb{R}^{|\mathcal{Y}|}_{\geq0}.
		\end{equation}
		Note that the left hand side of~\eqref{eq:proofconvexity} is a quadratic equation in $\frac{\mathbf{a}^\top\mathbf{t}}{\mathbf{b}^\top\mathbf{t}}$, which, by Lemma~\ref{lemma:maxratio},  takes value in $[r_x^{\textup{L}},r_x^{\textup{U}}]$. Therefore, if~\eqref{eq:convexitycondition} is satisfied, then~\eqref{eq:proofconvexity} holds at the two end points of $[r_x^{\textup{L}},r_x^{\textup{U}}]$, which implies that~\eqref{eq:proofconvexity} holds for any $\mathbf{t}\in\mathbb{R}^{|\mathcal{Y}|}_{\geq0}$.
	\end{enumerate}
	In summary, if~\eqref{eq:convexitycondition} is satisfied, then $h_1(\cdot)$ is convex and so are $h(\cdot)$ and $\hat{L}$, which completes the proof.	
\end{proof}

\section{A useful lemma}
\begin{lema}[Bounds for a ratio]\label{lemma:maxratio}
	Given $\mathbf{u}\in\mathbb{R}^n_{\geq0}$ and $\mathbf{v}\in\mathbb{R}^n_{\geq0}$ that satisfy
	\begin{enumerate}[label=(\alph*)]
	\item\label{itm:cond1} $\mathbf{u}\neq \mathbf{0}$ and $\mathbf{v}\neq\mathbf{0}$;
	\item\label{itm:cond2} for any $i\in\{1,\dots,n\}$, if $\mathbf{u}_i>0$, then $\mathbf{v}_i>0$.
	\end{enumerate}
Let 
	\begin{equation*}
	r^{\textup{L}}=\min_{i:\mathbf{u}_i>0}\frac{\mathbf{u}_i}{\mathbf{v}_i}\quad\textup{and}\quad r^{\textup{U}}=\max_{i:\mathbf{u}_i>0}\frac{\mathbf{u}_i}{\mathbf{v}_i}.
\end{equation*}
	Then for any $\mathbf{w}\in\mathbb{R}^n_{\geq0}$ such that $\mathbf{w}^\top\mathbf{u}>0$, we have
	\begin{equation*}
		r^{\textup{L}}\leq\frac{\mathbf{w}^\top\mathbf{u}}{\mathbf{w}^\top\mathbf{v}}\leq r^{\textup{U}}.
	\end{equation*}
Moreover, for any $r\in[r^{\textup{L}},r^{\textup{U}}]$, there exists a $\mathbf{w}\in\mathbb{R}^n_{\geq0}$ with $\mathbf{w}^\top\mathbf{u}>0$ such that $\frac{\mathbf{w}^\top\mathbf{u}}{\mathbf{w}^\top\mathbf{v}}=r$.
\end{lema}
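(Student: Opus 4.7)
The plan is to reduce the two inequalities to elementwise comparisons, then to establish achievability by a continuity argument along a one-parameter family of weight vectors.

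For the upper bound, I would first show that $\mathbf{u}_i \leq r^{\textup{U}}\mathbf{v}_i$ for every index $i$. On the support of $\mathbf{u}$, condition \ref{itm:cond2} ensures $\mathbf{v}_i > 0$, so the inequality reduces to $\mathbf{u}_i/\mathbf{v}_i \leq r^{\textup{U}}$, which holds by the definition of $r^{\textup{U}}$; off the support of $\mathbf{u}$, the inequality is trivial because $r^{\textup{U}}\mathbf{v}_i \geq 0$. Multiplying each elementwise inequality by $\mathbf{w}_i \geq 0$ and summing over $i$ gives $\mathbf{w}^\top\mathbf{u} \leq r^{\textup{U}}\,\mathbf{w}^\top\mathbf{v}$. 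The hypothesis $\mathbf{w}^\top\mathbf{u} > 0$ produces some index $j$ with $\mathbf{w}_j > 0$ and $\mathbf{u}_j > 0$, and condition \ref{itm:cond2} then forces $\mathbf{v}_j > 0$, so $\mathbf{w}^\top\mathbf{v} \geq \mathbf{w}_j\mathbf{v}_j > 0$ and we may divide. The lower bound is obtained by the symmetric argument using $\mathbf{u}_i \geq r^{\textup{L}}\mathbf{v}_i$ on the support of $\mathbf{u}$.

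For the achievability claim, let $i_{\textup{L}}$ and $i_{\textup{U}}$ be indices (with $\mathbf{u}_i > 0$) attaining $r^{\textup{L}}$ and $r^{\textup{U}}$, respectively. The choices $\mathbf{w} = \mathbf{e}_{i_{\textup{L}}}$ and $\mathbf{w} = \mathbf{e}_{i_{\textup{U}}}$ realize the two endpoints exactly, since $\mathbf{e}_i^\top\mathbf{u}/\mathbf{e}_i^\top\mathbf{v} = \mathbf{u}_i/\mathbf{v}_i$ whenever $\mathbf{v}_i > 0$. For an arbitrary target $r \in [r^{\textup{L}},r^{\textup{U}}]$, I would consider the one-parameter family $\mathbf{w}(\lambda) = (1-\lambda)\mathbf{e}_{i_{\textup{L}}} + \lambda\mathbf{e}_{i_{\textup{U}}}$ for $\lambda \in [0,1]$. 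Throughout this interval the numerator $\mathbf{w}(\lambda)^\top\mathbf{u}$ stays strictly positive (both endpoints lie in the support of $\mathbf{u}$), hence the denominator is also strictly positive, and $\lambda \mapsto \mathbf{w}(\lambda)^\top\mathbf{u}/\mathbf{w}(\lambda)^\top\mathbf{v}$ is a continuous function of $\lambda$ taking values $r^{\textup{L}}$ at $\lambda = 0$ and $r^{\textup{U}}$ at $\lambda = 1$. The intermediate value theorem then supplies a $\lambda$ for which the ratio equals $r$.

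The main bookkeeping obstacle lies in the treatment of indices $i$ with $\mathbf{u}_i = 0$: these contribute nothing to $\mathbf{w}^\top\mathbf{u}$ but can contribute positively to $\mathbf{w}^\top\mathbf{v}$. For the upper bound this causes no difficulty, since $\mathbf{u}_i \leq r^{\textup{U}}\mathbf{v}_i$ remains valid with both sides nonnegative. The subtle step is verifying the lower-bound inequality uniformly over all admissible $\mathbf{w}$; the argument hinges on condition \ref{itm:cond2} aligning the supports correctly so that the indices where $\mathbf{u}_i = 0$ do not break the chain of elementwise comparisons.
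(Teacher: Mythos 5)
Your upper-bound argument and your achievability argument are both sound and essentially match the paper. The elementwise inequality $\mathbf{u}_i \leq r^{\textup{U}}\mathbf{v}_i$ for every $i$ (by definition on the support of $\mathbf{u}$, trivially off it), multiplied by $\mathbf{w}_i\geq0$ and summed, is the same computation the paper performs by cross-multiplying $\mathbf{u}_{i^*}\mathbf{v}_j\geq\mathbf{v}_{i^*}\mathbf{u}_j$ and summing over $j$; your explicit check that $\mathbf{w}^\top\mathbf{v}>0$ before dividing is a detail the paper leaves implicit. For achievability, the paper writes down explicit weights supported on the two extremal indices, namely $\mathbf{w}_{j^*}=1$ and $\mathbf{w}_{i^*}=\frac{\lambda}{1-\lambda}\frac{\mathbf{v}_{j^*}}{\mathbf{v}_{i^*}}$, and verifies by direct computation that the ratio equals $\lambda r^{\textup{L}}+(1-\lambda)r^{\textup{U}}$, whereas you invoke the intermediate value theorem along the segment between $\mathbf{e}_{i_{\textup{L}}}$ and $\mathbf{e}_{i_{\textup{U}}}$; both are correct and use the same two-coordinate family.

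The genuine gap is the lower bound, which you dispatch as ``the symmetric argument'' while correctly flagging the indices with $\mathbf{u}_i=0$ as the subtle point --- but that subtlety cannot be resolved in the direction you hope. Off the support of $\mathbf{u}$, the elementwise inequality you would need, $\mathbf{u}_i\geq r^{\textup{L}}\mathbf{v}_i$, reads $0\geq r^{\textup{L}}\mathbf{v}_i$ and fails whenever $\mathbf{v}_i>0$ and $r^{\textup{L}}>0$. Condition (b) gives only the containment $\{i:\mathbf{u}_i>0\}\subseteq\{i:\mathbf{v}_i>0\}$, which is precisely the direction that rescues the upper bound and does nothing for the lower one. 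Indeed the lower bound is false as stated: take $n=2$, $\mathbf{u}=(1,0)^\top$, $\mathbf{v}=(1,1)^\top$, $\mathbf{w}=(1,1)^\top$; conditions (a) and (b) hold, $r^{\textup{L}}=1$, yet $\mathbf{w}^\top\mathbf{u}/\mathbf{w}^\top\mathbf{v}=1/2$. Repairing it requires either the reverse support containment ($\mathbf{v}_i>0$ implies $\mathbf{u}_i>0$) or restricting $\mathbf{w}$ to the support of $\mathbf{u}$. For what it is worth, the paper's own proof has the same hole --- it establishes the upper bound carefully and then writes ``similarly'' for the lower bound --- so the defect is inherited from the source, but your write-up does not close it either.
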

\begin{proof}	
	Let $i^*$ be such that $i^*=\argmax_{i:\mathbf{u}_i>0}\frac{\mathbf{u}_i}{\mathbf{v}_i}$. By conditions~\ref{itm:cond1} and \ref{itm:cond2}, such an $i^*$ always exists. Then, for any $j\in\{1,\dots,n\}$ such that $\mathbf{u}_j>0$ (and thus $\mathbf{v}_j>0$ by~\ref{itm:cond2}), we have that
	\begin{align}\label{eq:inequv}
		\frac{\mathbf{u}_{i^*}}{\mathbf{v}_{i^*}}\geq\frac{\mathbf{u}_{j}}{\mathbf{v}_{j}}&\implies\mathbf{u}_{i^*}\mathbf{v}_{j}\geq\mathbf{v}_{i^*}\mathbf{u}_{j}\nonumber\\
		&\implies\mathbf{u}_{i^*}\mathbf{v}_{j}\mathbf{w}_j\geq\mathbf{v}_{i^*}\mathbf{u}_{j}\mathbf{w}_j.
	\end{align}
	Summing over $j$ on both sides of~\eqref{eq:inequv}, we have
	\begin{equation*}		
		\mathbf{u}_{i^*}\sum_{j}\mathbf{v}_{j}\mathbf{w}_j\geq\mathbf{v}_{i^*}\sum_{j}\mathbf{u}_{j}\mathbf{w}_j
		\implies
		\frac{\mathbf{w}^\top\mathbf{u}}{\mathbf{w}^\top\mathbf{v}}\leq\frac{\mathbf{u}_{i^*}}{\mathbf{v}_{i^*}}=r^{\textup{U}}.
	\end{equation*}
Similarly, we can show that $r^{\textup{L}}\leq\frac{\mathbf{w}^\top\mathbf{u}}{\mathbf{w}^\top\mathbf{v}}$.

For any $r\in[r^{\textup{L}},r^{\textup{U}}]$, there exists a $\lambda\in[0,1]$, such that
\begin{align*}
	r&=\lambda r^{\textup{L}}+(1-\lambda)r^{\textup{U}}.
\end{align*}
Let $i^*$ and $j^*$ be such that $\frac{\mathbf{u}_{i^*}}{\mathbf{v}_{i^*}}=r^{\textup{U}}$ and $\frac{\mathbf{u}_{j^*}}{\mathbf{v}_{j^*}}=r^{\textup{L}}$. 
\begin{enumerate}
\item When $\lambda = 0$ ($\lambda=1$), we pick $\mathbf{w}_{i^*}=1$ ($\mathbf{w}_{j^*}=1$) and $\mathbf{w}_{j}=0$ for all $j\neq i^*$ ($\mathbf{w}_{j}=0$ for all $j\neq j^*$).
\item When $\lambda\in(0,1)$, let $\mathbf{w}_j=0$ for $j\notin\{i^*,j^*\}$,  $\mathbf{w}_{j^*}=1$ and 
\begin{equation*}
\mathbf{w}_{i^*}=\frac{\lambda}{1-\lambda}\frac{\mathbf{v}_{j^*}}{\mathbf{v}_{i^*}}.
\end{equation*}
Then,
\begin{align*}
\frac{\mathbf{w}^\top\mathbf{u}}{\mathbf{w}^\top\mathbf{v}}&=\frac{\frac{\lambda}{1-\lambda}\frac{\mathbf{v}_j^*}{\mathbf{v}_i^*}\mathbf{u}_{i^*}+\mathbf{u}_{j^*}}{\frac{\lambda}{1-\lambda}\frac{\mathbf{v}_j^*}{\mathbf{v}_i^*}\mathbf{v}_{i^*}+\mathbf{v}_{j^*}}\\
&=\frac{\frac{\lambda}{1-\lambda}\frac{\mathbf{v}_j^*}{\mathbf{v}_i^*}\mathbf{u}_{i^*}+\mathbf{u}_{j^*}}{\frac{1}{1-\lambda}\mathbf{v}_{j^*}}\\
&=\lambda\frac{\mathbf{u}_{i^*}}{\mathbf{v}_{i^*}}+(1-\lambda)\frac{\mathbf{u}_{j^*}}{\mathbf{v}_{i^*}}=r.
\end{align*}
\end{enumerate}
The proof is completed.
\end{proof}

\bibliographystyle{plainurl}
\bibliography{alias,mybib}

\begin{thebibliography}{10}

\bibitem{SA-FA-TL:14}
S.~Asoodeh, F.~Alajaji, and T.~Linder.
\newblock Notes on information-theoretic privacy.
\newblock In {\em {Annual} {Allerton} {Conference} on {Communication},
  {Control}, and {Computing}}, pages 1272--1278, Monticello, IL, USA, September
  2014.

\bibitem{YOB-YW-PI:16}
Y.~O. Basciftci, Y.~Wang, and P.~Ishwar.
\newblock On privacy-utility tradeoffs for constrained data release mechanisms.
\newblock In {\em {Information} {Theory} and {Applications} {Workshop}}, La
  Jolla, CA, USA, January 2016.

\bibitem{MB-OG-AY-FO-HVP-LS-RFS:21}
M.~Bloch, O.~G\"{u}nl\"{u}, A.~Yener, F.~Oggier, H.~V. Poor, L.~Sankar, and
  R.~F. Schaefer.
\newblock An {Overview} of {Information}-{Theoretic} {Security} and {Privacy}:
  {Metrics}, {Limits} and {Applications}.
\newblock {\em IEEE Journal on Selected Areas in Information Theory},
  2(1):5--22, 2021.
\newblock \href {https://doi.org/10.1109/JSAIT.2021.3062755}
  {\path{doi:10.1109/JSAIT.2021.3062755}}.

\bibitem{SB-LV:04}
S.~Boyd and L.~Vandenberghe.
\newblock {\em Convex Optimization}.
\newblock Cambridge University Press, 2004.

\bibitem{XC-JZ-VHP-ZT:20}
X.~Cao, J.~Zhang, V.~H. Poor, and Z.~Tian.
\newblock Differentially {Private} {ADMM} for {Regularized} {Consensus}
  {Optimization}.
\newblock {\em IEEE Transactions on Automatic Control}, pages 1--1, 2020.
\newblock \href {https://doi.org/10.1109/TAC.2020.3022856}
  {\path{doi:10.1109/TAC.2020.3022856}}.

\bibitem{JC-GED-SH-JLeN-SM-GJP:16}
J.~Cort\'{e}s, G.~E. Dullerud, S.~Han, J.~Le Ny, S.~Mitra, and G.~J. Pappas.
\newblock Differential privacy in control and network systems.
\newblock In {\em {IEEE} {Conference} on {Decision} and {Control}}, pages
  4252--4272, Las Vegas, NV, USA, December 2016.

\bibitem{TMC-JAT:12}
T.~M. Cover and J.~A. Thomas.
\newblock {\em Elements of Information Theory}.
\newblock John Wiley \& Sons, 2012.

\bibitem{DD-CG:2017}
D.~Dheeru and C.~Graff.
\newblock {UCI} machine learning repository: Census income data set, 2017.
\newblock URL: \url{https://archive.ics.uci.edu/ml/datasets/Census+Income}.

\bibitem{MD-HW-FduPC-LS:20}
M.~Diaz, H.~Wang, F.~du~Pin~Calmon, , and L.~Sankar.
\newblock On the {Robustness} of {Information}-{Theoretic} {Privacy} {Measures}
  and {Mechanisms}.
\newblock {\em IEEE Transactions on Information Theory}, 66(4):1949--1978,
  2020.
\newblock \href {https://doi.org/10.1109/TIT.2019.2939472}
  {\path{doi:10.1109/TIT.2019.2939472}}.

\bibitem{TD-SZ-JH-CC-XG:21}
T.~Ding, S.~Zhu, J.~He, C.~Chen, and X.~Guan.
\newblock Differentially {Private} {Distributed} {Optimization} via {State} and
  {Direction} {Perturbation} in {Multi}-agent {Systems}.
\newblock {\em IEEE Transactions on Automatic Control}, pages 1--1, 2021.
\newblock \href {https://doi.org/10.1109/TAC.2021.3059427}
  {\path{doi:10.1109/TAC.2021.3059427}}.

\bibitem{FduPC-NF:12}
F.~du~Pin~Calmon and N.~Fawaz.
\newblock Privacy against statistical inference.
\newblock In {\em {Annual} {Allerton} {Conference} on {Communication},
  {Control}, and {Computing}}, pages 1401--1408, Monticello, IL, USA, October
  2012.

\bibitem{FduPC-AM-MM:15}
F.~du~Pin~Calmon, A.~Makhdoumi, and M.~M\'{e}dard.
\newblock Fundamental limits of perfect privacy.
\newblock In {\em {IEEE} {International} {Symposium} on {Information}
  {Theory}}, pages 1796--1800, Hong Kong, China, June 2015.

\bibitem{CD-AR:14}
C.~Dwork and A.~Roth.
\newblock The {Algorithmic} {Foundations} of {Differential} {Privacy}.
\newblock {\em Foundations and Trends® in Theoretical Computer Science},
  9(3–4):211--407, 2014.
\newblock \href {https://doi.org/10.1561/0400000042}
  {\path{doi:10.1561/0400000042}}.

\bibitem{MAE-NF:15}
M.~A. Erdogdu and N.~Fawaz.
\newblock Privacy-utility trade-off under continual observation.
\newblock In {\em {IEEE} {International} {Symposium} on {Information}
  {Theory}}, pages 1801--1805, Hong Kong, China, June 2015.

\bibitem{MG-SB:08}
M.~Grant and S.~Boyd.
\newblock Graph implementations for nonsmooth convex programs.
\newblock In V.~Blondel, S.~Boyd, and H.~Kimura, editors, {\em Recent Advances
  in Learning and Control}, volume 371 of {\em Lecture Notes in Control and
  Information Sciences}, pages 95--110. Springer-Verlag Limited, 2008.
\newblock \href {https://doi.org/10.1007/978-1-84800-155-8_7}
  {\path{doi:10.1007/978-1-84800-155-8_7}}.

\bibitem{cvx}
Michael Grant and Stephen Boyd.
\newblock {CVX}: Matlab software for disciplined convex programming, version
  2.1.
\newblock \url{http://cvxr.com/cvx}, March 2014.

\bibitem{SH-GJP:18}
S.~Han and G.~J. Pappas.
\newblock Privacy in {Control} and {Dynamical} {Systems}.
\newblock {\em Annual Review of Control, Robotics, and Autonomous Systems},
  1(1):309--332, 2018.
\newblock \href {https://doi.org/10.1146/annurev-control-060117-105018}
  {\path{doi:10.1146/annurev-control-060117-105018}}.

\bibitem{SH-UT-GJP:17}
S.~Han, U.~Topcu, and G.~J. Pappas.
\newblock Differentially {Private} {Distributed} {Constrained} {Optimization}.
\newblock {\em IEEE Transactions on Automatic Control}, 62(1):50--64, 2017.
\newblock \href {https://doi.org/10.1109/TAC.2016.2541298}
  {\path{doi:10.1109/TAC.2016.2541298}}.

\bibitem{II-ABW-SK:20}
I.~Issa, A.~B. Wagner, and S.~Kamath.
\newblock An {Operational} {Approach} to {Information} {Leakage}.
\newblock {\em IEEE Transactions on Information Theory}, 66(3):1625--1657,
  2020.
\newblock \href {https://doi.org/10.1109/TIT.2019.2962804}
  {\path{doi:10.1109/TIT.2019.2962804}}.

\bibitem{MPJ-LZ-SC:18}
M.~P. Johnson, L.~Zhao, and S.~Chakraborty.
\newblock Achieving {Pareto}-{Optimal} {MI}-{Based} {Privacy}-{Utility}
  {Tradeoffs} {Under} {Full} {Data}.
\newblock {\em IEEE Journal of Selected Topics in Signal Processing},
  12(5):1093--1105, 2018.
\newblock \href {https://doi.org/10.1109/JSTSP.2018.2850021}
  {\path{doi:10.1109/JSTSP.2018.2850021}}.

\bibitem{SK-NB-NLJ:00}
S.~Kotz, N.~Balakrishnan, and N.~L. Johnson.
\newblock {\em Continuous Multivariate Distributions, Volume I: Models and
  Applications}.
\newblock Wiley, Hoboken, New Jersey, 2 edition, 2000.

\bibitem{SL-AK-AM:18}
S.~Li, A.~Khisti, and A.~Mahajan.
\newblock Information-{Theoretic} {Privacy} for {Smart} {Metering} {Systems}
  with a {Rechargeable} {Battery}.
\newblock {\em IEEE Transactions on Information Theory}, 64(5):3679--3695,
  2018.
\newblock \href {https://doi.org/10.1109/TIT.2018.2809005}
  {\path{doi:10.1109/TIT.2018.2809005}}.

\bibitem{JL-OK-LS-FduPC:19}
J.~Liao, O.~Kosut, L.~Sankar, and F.~du~Pin~Calmon.
\newblock Tunable {Measures} for {Information} {Leakage} and {Applications} to
  {Privacy}-{Utility} {Tradeoffs}.
\newblock {\em IEEE Transactions on Information Theory}, 65(12):8043--8066,
  2019.
\newblock \href {https://doi.org/10.1109/TIT.2019.2935768}
  {\path{doi:10.1109/TIT.2019.2935768}}.

\bibitem{JL-LS-VYFT-FduPC:18}
J.~Liao, L.~Sankar, V.~Y.~F. Tan, and F.~du~Pin~Calmon.
\newblock Hypothesis {Testing} {Under} {Mutual} {Information} {Privacy}
  {Constraints} in the {High} {Privacy} {Regime}.
\newblock {\em IEEE Transactions on Information Forensics and Security},
  13(4):1058--1071, 2018.
\newblock \href {https://doi.org/10.1109/TIFS.2017.2779108}
  {\path{doi:10.1109/TIFS.2017.2779108}}.

\bibitem{YL-MZ:19}
Y.~Lu and M.~Zhu.
\newblock A control-theoretic perspective on cyber-physical privacy: {Where}
  data privacy meets dynamic systems.
\newblock {\em Annual Reviews in Control}, 47:423--440, 2019.
\newblock \href {https://doi.org/10.1016/j.arcontrol.2019.04.010}
  {\path{doi:10.1016/j.arcontrol.2019.04.010}}.

\bibitem{AM-SS-NF-MM:14}
A.~Makhdoumi, S.~Salamatian, N.~Fawaz, and M.~M\'{e}dard.
\newblock From the {Information} {Bottleneck} to the {Privacy} {Funnel}.
\newblock In {\em {IEEE} {Information} {Theory} {Workshop}}, pages 501--505,
  Hobart, TAS, Australia, November 2014.

\bibitem{NM-MF:98}
N.~Merhav and M.~Feder.
\newblock Universal prediction.
\newblock {\em IEEE Transactions on Information Theory}, 44(6):2124--2147,
  1998.
\newblock \href {https://doi.org/10.1109/18.720534}
  {\path{doi:10.1109/18.720534}}.

\bibitem{EN-HS-MS-KHJ:21}
E.~Nekouei, H.~Sandberg, M.~Skoglund, and K.~H. Johansson.
\newblock Optimal {Privacy}-aware {Estimation}.
\newblock {\em IEEE Transactions on Automatic Control}, pages 1--1, 2021.
\newblock \href {https://doi.org/10.1109/TAC.2021.3077868}
  {\path{doi:10.1109/TAC.2021.3077868}}.

\bibitem{EN-TT-MS-KHJ:19}
E.~Nekouei, T.~Tanaka, M.~Skoglund, and K.~H. Johansson.
\newblock Information-theoretic approaches to privacy in estimation and
  control.
\newblock {\em Annual Reviews in Control}, 47:412--422, 2019.
\newblock \href {https://doi.org/10.1016/j.arcontrol.2019.04.006}
  {\path{doi:10.1016/j.arcontrol.2019.04.006}}.

\bibitem{EN-PT-JC:17}
E.~Nozari, P.~Tallapragada, and J.~Cort\'{e}s.
\newblock Differentially private average consensus: {Obstructions}, trade-offs,
  and optimal algorithm design.
\newblock {\em Automatica}, 81:221--231, 2017.
\newblock \href {https://doi.org/10.1016/j.automatica.2017.03.016}
  {\path{doi:10.1016/j.automatica.2017.03.016}}.

\bibitem{EN-PT-JC:18}
E.~Nozari, P.~Tallapragada, and J.~Cort\'{e}s.
\newblock Differentially {Private} {Distributed} {Convex} {Optimization} via
  {Functional} {Perturbation}.
\newblock {\em IEEE Transactions on Control of Network Systems}, 5(1):395--408,
  2018.
\newblock \href {https://doi.org/10.1109/TCNS.2016.2614100}
  {\path{doi:10.1109/TCNS.2016.2614100}}.

\bibitem{JLeN-GJP:14}
J.~Le Ny and G.~J. Pappas.
\newblock Differentially {Private} {Filtering}.
\newblock {\em IEEE Transactions on Automatic Control}, 59(2):341--354, 2014.
\newblock \href {https://doi.org/10.1109/TAC.2013.2283096}
  {\path{doi:10.1109/TAC.2013.2283096}}.

\bibitem{BR-DG:20}
B.~Rassouli and D.~G\"{u}nd\"{u}z.
\newblock Optimal {Utility}-{Privacy} {Trade}-{Off} {With} {Total} {Variation}
  {Distance} as a {Privacy} {Measure}.
\newblock {\em IEEE Transactions on Information Forensics and Security},
  15:594--603, 2020.
\newblock \href {https://doi.org/10.1109/TIFS.2019.2903658}
  {\path{doi:10.1109/TIFS.2019.2903658}}.

\bibitem{SS-AZ-FduPC-SB-NF-BK-PO-NT:15}
S.~Salamatian, A.~Zhang, F.~du~Pin~Calmon, S.~Bhamidipati, N.~Fawaz, B.~Kveton,
  P.~Oliveira, and N.~Taft.
\newblock Managing {Your} {Private} and {Public} {Data}: {Bringing} {Down}
  {Inference} {Attacks} {Against} {Your} {Privacy}.
\newblock {\em IEEE Journal of Selected Topics in Signal Processing},
  9(7):1240--1255, 2015.
\newblock \href {https://doi.org/10.1109/JSTSP.2015.2442227}
  {\path{doi:10.1109/JSTSP.2015.2442227}}.

\bibitem{LS-SRR-HVP:13}
L.~Sankar, S.~R. Rajagopalan, and H.~V. Poor.
\newblock Utility-{Privacy} {Tradeoffs} in {Databases}: {An}
  {Information}-{Theoretic} {Approach}.
\newblock {\em IEEE Transactions on Information Forensics and Security},
  8(6):838--852, 2013.
\newblock \href {https://doi.org/10.1109/TIFS.2013.2253320}
  {\path{doi:10.1109/TIFS.2013.2253320}}.

\bibitem{BKS-GRGL:09}
B.~K. Sriperumbudur and G.~R.~G. Lanckriet.
\newblock On the {Convergence} of the {Concave}-{Convex} {Procedure}.
\newblock In {\em Advances in Neural Information Processing Systems}, pages
  1759--1767, Vancouver, BC, Canada, December 2009.

\bibitem{HALeT-TPD:18}
H.~A.~Le Thi and T.~Pham Dinh.
\newblock {DC} programming and {DCA}: thirty years of developments.
\newblock {\em Mathematical Programming}, 169(1):5--68, 2018.
\newblock \href {https://doi.org/10.1007/s10107-018-1235-y}
  {\path{doi:10.1007/s10107-018-1235-y}}.

\bibitem{IW-DE:18}
I.~Wagner and D.~Eckhoff.
\newblock Technical {Privacy} {Metrics}: {A} {Systematic} {Survey}.
\newblock {\em ACM Computing Surveys}, 51(3):57:1--57:38, 2018.
\newblock \href {https://doi.org/10.1145/3168389} {\path{doi:10.1145/3168389}}.

\bibitem{WW-LY-JZ:16}
W.~Wang, L.~Ying, and J.~Zhang.
\newblock On the {Relation} {Between} {Identifiability}, {Differential}
  {Privacy}, and {Mutual}-{Information} {Privacy}.
\newblock {\em IEEE Transactions on Information Theory}, 62(9):5018--5029,
  2016.
\newblock \href {https://doi.org/10.1109/TIT.2016.2584610}
  {\path{doi:10.1109/TIT.2016.2584610}}.

\end{thebibliography}

\end{document}